\documentclass[journal,onecolumn,11pt]{IEEEtran}
\usepackage{amsfonts,amsmath,amssymb}
\usepackage{indentfirst, setspace}
\usepackage{url,float}
\usepackage{color}
\usepackage[a4paper,ignoreall]{geometry}
\geometry{left=2cm, right=2cm, top=4cm, bottom=4cm}
\usepackage{longtable}
\setcounter{tocdepth}{3}
\usepackage{graphicx}
\usepackage[a4paper,ignoreall]{geometry}
\usepackage{multicol}
\usepackage{stfloats}
\usepackage{enumerate}
\usepackage{cite}
\usepackage{amsthm}
\usepackage{booktabs}
\usepackage{mathrsfs}
\usepackage{multirow}
\usepackage{amssymb}
\usepackage{verbatim}
\usepackage{cases}
\usepackage[square, comma, sort&compress, numbers]{natbib}
\usepackage[overload]{empheq}
\def\qu#1 {\fbox {\footnote {\ }}\ \footnotetext { From Qu: {\color{red}#1}}}
\def\hqu#1 {}
\def\kq#1 {\fbox {\footnote {\ }}\ \footnotetext { From KangQuan: {\color{blue}#1}}}
\def\hkq#1 {}

\usepackage{stfloats}

\usepackage{array}
\usepackage{soul}
\usepackage{booktabs}


\newtheorem{Th}{Theorem}

\newtheorem{Lemma}[Th]{Lemma}
\newtheorem{Def}[Th]{Definition}

\newcommand{\tr}{{\rm Tr}}

\newcommand{\gf}{{\mathbb F}}


\makeatletter
\newcommand{\figcaption}{\def\@captype{figure}\caption}
\newcommand{\tabcaption}{\def\@captype{table}\caption}
\makeatother

\begin{document}
	\title{Two new infinite classes of APN functions}
	\author{{ Kangquan Li, Yue Zhou, Chunlei Li and Longjiang Qu}
		\thanks{\noindent Kangquan Li, Yue Zhou and Longjiang Qu are with the College of Liberal Arts and Sciences,
			National University of Defense Technology, Changsha, 410073, China and Hunan Engineering Research Center of Commercial Cryptography Theory and Technology Innovation.
			Chunlei Li is with the Department of Informatics, University of Bergen, Bergen N-5020, Norway.
			The work of Longjiang Qu was supported by the Nature Science Foundation of China (NSFC) under Grant (No.~62032009),  National Key R$\&$D Program of China (No.~2017YFB0802000). The work of Yue Zhou was supported by the Training Program for Excellent Young  Innovators of Changsha (No.\ kq1905052). The work of  Chunlei Li was supported by the Research Council of Norway (No.~247742/O70 and No.~311646/O70) and the National Natural Science Foundation of China under Grant (No.~61771021).
			\smallskip		
		\textbf{Emails}: 	likangquan11@nudt.edu.cn, gabelozhou@gmail.com, chunlei.li@uib.no and ljqu\_happy@hotmail.com.
		}
	}
	\maketitle{}

\begin{abstract}
	In this paper, we present two new infinite classes of APN functions over $\gf_{{2^{2m}}}$ and $\gf_{{2^{3m}}}$, respectively. The first one is with bivariate form and obtained by adding special terms, $\sum(a_ix^{2^i}y^{2^i},b_ix^{2^i}y^{2^i})$, to a known class of APN functions by {G{\"{o}}lo{\v{g}}lu} over $\gf_{{2^m}}^2$. The second one is of the form $L(z)^{2^m+1}+vz^{2^m+1}$ over $\gf_{{2^{3m}}}$, which is a generalization of one family of APN functions by Bracken et al. [Cryptogr. Commun. 3 (1): 43-53, 2011]. 
	The calculation of the CCZ-invariants $\Gamma$-ranks of our APN classes over $\gf_{{2^8}}$ or $\gf_{{2^9}}$
	indicates that  they are CCZ-inequivalent to all known infinite families of APN functions. Moreover,  by using the code isomorphism, we see that our first APN family covers an APN function over $\gf_{{2^8}}$ obtained through the switching method by Edel and Pott in [Adv. Math. Commun. 3 (1): 59-81, 2009]. 
\end{abstract}

\begin{IEEEkeywords}
	APN functions, Adding terms, Bivariate forms
\end{IEEEkeywords}

\section{Introduction}

S-boxes are crucial nonlinear components in block ciphers. In order to resist some known attacks, S-boxes used in block ciphers are required to satisfy a variety of cryptographic criteria, such as low differential uniformity for the differential attacks \cite{biham1991differential}. The definition of the differential uniformity is as follows.
\begin{Def}
	\cite{nyberg1993differentially}
	Let $f$ be a function over $\gf_{{2^n}}$. Then the differential uniformity of $f$ is 
	$$\delta_f = \max_{a\in\gf_{{2^n}}^{*}, b\in\gf_{{2^n}}} \#\{ z: z \in \gf_{{2^n}} | f(z+a) + f(z) = b  \}. $$ 
\end{Def}
 \noindent It is well known that for even characteristic, the almost perfect nonlinear (APN for short) functions \cite{nyberg1993differentially} with differential uniformity $2$ provide the best resistance to the differential attacks. Moreover, for a quadratic function $f$ over $\gf_{{2^n}}$, $f$ is APN if and only if the equation
 \begin{equation*}
 	\label{APN}
 	f(z+a)+f(z)+f(a) = 0
 \end{equation*}
has exactly two solutions $z=0, a$ in $\gf_{{2^n}}$ for any $a\in\gf_{{2^n}}^{*}$. The construction of new infinite classes of APN functions has been one of the most important topics {in the study of cryptographic functions}. For detailed information about APN functions, we invite the readers to consult {Carlet's recent book} \cite[Ch. 11]{carlet-2020}.

{By far known APN functions are mostly represented in  univariate and bivariate forms. While APN functions with bivariate forms can be also represented in univariate forms
according to certain isomorphism between $\gf_{2^{2m}}$ and $\gf_{2^m}^2$, their corresponding univariate forms are usually more complex.}
In order to present a concise form of APN families, in the following we summarize all known APN families from their initial expressions. 
 In this paper, we use $z$ and $x,y$ to denote the variables of univariate and bivariate forms, respectively. 
APN monomials are classical functions with univariate form.  All known APN monomials can be found in Table \ref{APN monomials}, {which was conjectured to be complete \cite{Dobbertin99-Niho}}. For all known infinite classes of non-monomials APN functions with univariate form, please {refer to Table \ref{APN polynomials}, in which $\tr_{m}^{n}$ denotes the trace function from $\gf_{{2^n}}$ to $\gf_{2^m}$ for any $m \mid n$, i.e., $\tr_m^n(z) = z+z^{2^m}+\cdots+z^{2^{\left(\frac{n}{m}-1\right)m}}$, and $\tr^n_{1}$ is shortly denoted as $\tr_{n}$.}
  Bivariate construction is a powerful method to get APN functions over $\gf_{2^m}^2$. Carlet \cite{Carlet11}, Zhou and Pott \cite{Zhou2013}, Taniguchi \cite{Taniguchi2019}, Calderini, Budaghyan and Carlet \cite{Calderini2020} constructed different APN functions over $\gf_{2^{m}}^2$ of the form $f(x,y)=(xy,G(x,y))$ by choosing  distinct bivariate functions $G(x,y)$. Very recently, G{\"o}lo{\v{g}}lu considered the construction of APN functions using bi-projective polynomials and obtained two classes of APN functions. All APN families of bivariate forms are listed in Table \ref{APN bivariate}, where $P_1(z)= z^{2^k+1}+az+b$ and $P_2(z) = (cz^{2^i+1}+bz^{2^i}+1)^{2^{m/2}+1}+z^{2^{m/2}+1}$. 
\begin{table}[!t]
	\caption{All Known APN monomials over $\gf_{2^n}$} \label{APN monomials}
	\centering
	\small
	\begin{tabular}{c c cc }	
		\toprule
		Family&	Function  & Conditions & Ref. \\
		\midrule
		Gold & $z^{2^i+1}$ & $\gcd(i,n)=1$ &  \cite{Gold1968} \\
		Kasami & $z^{2^{2i}-2^i+1}$ & $\gcd(i,n)=1$ &  \cite{Kasami1971The} \\
		Welch & $z^{2^t+3}$ & $n=2t+1$ &  \cite{Dobbertin99} \\
		Niho-1& $z^{2^t+2^{t/2}-1}$ & $n=2t+1, t$ even &  \cite{Dobbertin99-Niho} \\
		Niho-2& $z^{2^t+2^{(3t+1)/2}-1}$ & $n=2t+1, t$ odd &  \cite{Dobbertin99-Niho} \\
		Inverse& $z^{2^{2t}-1} $ & $n=2t+1$ &  \cite{nyberg1993differentially} \\
		Dobbertin  & $z^{2^{4i}+2^{3i}+2^{2i}+2^i-1}$ & $n=5i$ & 
		\cite{Dobbertin2001} \\
		\bottomrule
	\end{tabular}
\end{table}

\begin{table}[!htbp]
		\caption{All Known APN infinite families with univariate forms (non-monomials) over $\gf_{2^n}$} \label{APN polynomials}
\centering
\small
	\begin{tabular}{c c cc }	
		\toprule
		No.&	Function  & Conditions & Ref. \\
		\midrule
		\begin{tabular}[c]{@{}l@{}} F1- F2\end{tabular} & $z^{2^s+1}+u^{2^k-1}z^{2^{ik}+2^{mk+s}}$                                               & \begin{tabular}[c]{@{}l@{}} $n=pk, \gcd(k,3)=\gcd(s,3k)=1,$\\ $p\in\{3,4\}, i = sk \pmod p, m = p-i,$\\ $n\ge 12$, $u$ primitive in $\gf_{2^n}^{*}$\end{tabular} & \cite{Budaghyan2008-1}
	\\
\hline
		F3                                              & \begin{tabular}[c]{@{}l@{}} $sz^{2^i(q+1)}+z^{2^i+1}+z^{q(2^i+1)}$\\ $+cz^{2^iq+1}+c^qz^{2^i+q}+z^{q+1}$\end{tabular} & \begin{tabular}[c]{@{}l@{}} $q=2^m$, $n=2m$, $\gcd(i,m)=1,$\\ $c\in\gf_{2^n}, s\in\gf_{2^n}\backslash\gf_{2^m}, z^{2^i+1}+cz^{2^i}+$\\ $c^qz+1$ has no solution $x$ with $x^{q+1}=1$\end{tabular}  & \cite{Budaghyan2008}  \\ 
	\hline  \specialrule{0em}{1pt}{1pt}
 F4 
	& $z^3+a^{-1}\tr_n(a^3z^9)$                                                       & $a\neq0$                                                                           & \cite{Budaghyan2009} \\
	\hline \specialrule{0em}{1pt}{1pt}
	F5	                                     &  $z^3+a^{-1}\tr_3^n(a^3z^9+a^6z^{18})$                  & $3 \mid n, a\neq0$                                                                            &  \cite{budaghyan2009on} \\
	\hline \specialrule{0em}{1pt}{1pt}
	F6                                              & $z^3+a^{-1}\tr_3^n(a^6z^{18}+a^{12}z^{36})$                                                    & $3 \mid n, a\neq0$                                                                                                             &  \cite{budaghyan2009on} \\
	\hline
	\begin{tabular}[c]{@{}l@{}}F7-F9\end{tabular}   & \begin{tabular}[c]{@{}l@{}} $uz^{2^s+1}+u^{2^m}z^{2^{-m}+2^{m+s}}+$\\ $vz^{2^{-m}+1}+wu^{2^m+1}z^{2^s+2^{m+s}}$\end{tabular} & \begin{tabular}[c]{@{}l@{}} $n=3m, \gcd(m,3)=\gcd(s,3m)=1, v,w\in\gf_{2^m}$ \\ $vw\neq1, 3 \mid {m+s}, u$ primitive in $\gf_{2^n}^{*}$ \end{tabular}               &  \cite{bracken2011a} \\
	\hline
	F10                                             & \begin{tabular}[c]{@{}l@{}} $a^2z^{2^{2m+1}+1}+b^2z^{2^{m+1}+1}+$ \\ $az^{2^{2m}+2}+bz^{2^m+2}+(c^2+c)z^3$ \end{tabular} & \begin{tabular}[c]{@{}l@{}} $n=3m, m$ odd, $L(z)=az^{2^{2m}}+bz^{2^m}+cz$\\ satisfies the conditions of Lemma 8 of \cite{budaghyan2020constructing} \end{tabular}                   & \cite{budaghyan2020constructing}  \\
	\hline
		F11                                             & \begin{tabular}[c]{@{}l@{}}$z^3+w z^{2^{i+1}}+w^2z^{3\cdot 2^m}$\\ $+z^{2^{i+m}+2^m}$ \end{tabular}  & \begin{tabular}[c]{@{}l@{}} $n=2m, m$ odd, $3\nmid m$, $w$ primitive  \\ in $\gf_{2^2}, s = m-2, (m-2)^{-1} \pmod n$ \end{tabular}        &  \cite{budaghyan2020a}\\
		\hline
		F12                                               & \begin{tabular}[c]{@{}l@{}} $a\tr_m^n(bz^3)+a^q\tr_m^n(b^3z^9)$\\ \end{tabular}   & \begin{tabular}[c]{@{}l@{}}$n=2m, m $ odd, $q=2^m$, $a\notin \gf_q,$\\ $b$ not a cube\end{tabular}                   & \cite{Zheng2021} \\
		\bottomrule
	\end{tabular}
\end{table}

\begin{table}[!htbp]
	\caption{All known APN families with bivariate forms over $\gf_{2^m}^2$} \label{APN bivariate}
	\centering
	\small
	\begin{tabular}{c c cc }	
		\toprule
		No. &	Function  & Conditions & Ref. \\
		\midrule
		F13 & $(xy, x^{2^k+1} + \alpha x^{(2^k+1)2^i})$ & $\gcd(k,m)=1$, $m$ even, $\alpha$ non-cubic &  \cite{Zhou2013} \\
		F14 & $(xy, x^{2^{3k}+2^{2k}}+ax^{2^{2k}}y^{2^k}+by^{2^k+1})$ & $\gcd(k,m)=1$, $P_1$ no root in $\gf_{2^{m}}$ &  \cite{Taniguchi2019} \\
		F15 & $(xy, x^{2^i+1}+x^{2^{i+m/2}}y^{2^{m/2}}+bxy^{2^i}+cy^{2^i+1})$ & $m$ even, $\gcd(i,m)=1$, $P_2$ no root in $\gf_{2^{m}}$ &  \cite{Calderini2020} \\
		F16 & $(x^{2^i+1}+xy^{2^i}+y^{2^i+1}, x^{2^{2i}+1}+x^{2^{2i}}y+y^{2^{2i}+1})$ & $\gcd(3i,m)=1$ &\cite{Gologlu2020}\\
		F17 & $(x^{2^i+1}+xy^{2^i}+y^{2^i+1}, x^{2^{3i}}y+xy^{2^{3i}})$ & $\gcd(3i,m)=1$, $m$ odd & \cite{Gologlu2020} \\
		\bottomrule
	\end{tabular}
\end{table}

In this paper, we construct two new infinite families of APN functions.
One class is inspired by the method of Dillon \cite{Dillon2006} and its generalization of Budaghyan and Carlet \cite{Budaghyan2008}.
In \cite{Dillon2006}, Dillon presented a way to construct APN functions of the form 
\begin{equation}
	\label{Dillon}
	f(z)=z(Az^2+B^q+Cz^{2q})+z^2(Dz^q+Ez^{2q})+Gz^{3q}
\end{equation}
 over $\gf_{q^2}$ with $q=2^m$. Particularly, in \cite{Budaghyan2008}, Budaghyan and Carlet obtained an infinite family of APN hexanomials based on construction \eqref{Dillon}, i.e., F3 in Table \ref{APN polynomials}. Let $f_1(z) = Az^3+Cz^{2q+1}+Dz^{q+2}+Gz^{3q}$. The coefficients for APNness of $f_1$ have been determined completely by Li, Li, Helleseth and Qu \cite{Li2020}. Later Chase and Lison\v{e}k \cite{Chase2020} proved that $f_1$ is APN if and only if $f_1$ is CCZ-equivalent to the Gold functions. Thus the APN hexanomials (F3 in Table \ref{APN polynomials}) obtained by Budaghyan and Carlet can be seen as  the summation of a known APN function and  $Bz^{q+1}+Ez^{2(q+1)}$. Inspired these observations, a natural idea is to try to find new APN functions over $\gf_{q^2}$ by adding special terms of the form $\sum c_iz^{2^i(q+1)}$, $c_i\in\gf_{q^2}^{*}$ to known APN families. Note that for the bivariate form over $\gf_q^2$, $\sum  c_iz^{2^i(q+1)}$ is actually of the form $\sum (a_ix^{2^i}y^{2^i},b_ix^{2^i}y^{2^i})$, $a_i,b_i\in\gf_q^{*}$. {Finally, we find a new infinite family of APN functions over $\gf_{2^{m}}^2$  by adding terms to F16 in Table \ref{APN bivariate} as follows
\begin{equation}\label{Eq-APN1}
f(x,y)=\left( x^3+xy^2+y^3+xy, x^5+x^4y+y^5+xy+x^2y^2 \right),
\end{equation}
 where $\gcd(3, m)=1$.
It is interesting that by the code isomorphism, our APN family includes an APN function over $\gf_{{2^8}}$ obtained through the switching method by Edel and Pott \cite{EdelP09}.}
Another class is motivated from the following APN quadrinomial over $\gf_{{2^{3m}}}$ obtained by Bracken et al. \cite{bracken2011a} $$f(z)=uz^{2^s+1}+u^{2^m}z^{2^{-m}+2^{m+s}}+vz^{2^{-m}+1}+wu^{2^m+1}z^{2^s+2^{m+s}}.$$
We assume that $w\neq0$. Choose $\gamma\in\gf_{{2^m}}$ satisfying $\gamma^{1-2^s}=w$, which always exists since $\gcd\left(2^m-1,2^s-1\right)=2^{\gcd(m,s)}-1=1$. Then $f(\gamma z)^{2^m} = \gamma^{2^s+1} (L(z)^{2^m+1}+(vw+1)z^{2^m+1})$, where $L(z)=u^{2^m}z^{2^{m+s}}+z$. Thus $f$ is linear equivalent to an APN family of the form $L(z)^{2^m+1}+vz^{2^m+1}$ with $L$ a permutation and $v\neq0$. By choosing a linearized permutation trinomial $L$, we propose another new infinite family of APN functions over $\gf_{{2^{3m}}}$ as follows: 
{
\begin{equation}\label{Eq-APN2}
f(z) = L(z)^{2^m+1}+vz^{2^m+1},
\end{equation}
where $\gcd(s,m)=1$, $v\in\gf_{2^{m}}^{*}$, $\mu\in\gf_{2^{3m}}$ with $\mu^{2^{2m}+2^m+1}\neq1$ and $ L(z) = z^{2^{m+s}}+\mu z^{2^s}+z$ permuting $\gf_{2^{3m}}$.
}

The remainder of this paper is organized as follows. We first in Section \ref{CCZ} discuss the issue of CCZ-equivalence that researchers may care about the most. Sections \ref{new APN} and \ref{newAPN2} prove the APNness of the functions in \eqref{Eq-APN1} and \eqref{Eq-APN2},  respectively. Section \ref{conclusion} concludes the work of this paper and presents some related problems. 

\section{CCZ-equivalence}
\label{CCZ}
{
Two functions $f$ and $g$ over $\gf_{2^n}$ are said to be \emph{Carlet-Charpin-Zinoviev (CCZ) equivalent} if there is an affine permutation of $\gf_{2^n}^2$ that maps the graph $G_f = \left\{ (z,f(z)): z \in \gf_{2^n} \right\}$ to the graph $G_g = \left\{ (z,g(z)): z \in \gf_{2^n} \right\}$. 
It is known that the CCZ equivalence 
the most general equivalence relation that preserves the differential uniformity of a function over $\gf_{2^n}$.
When an infinite class of APN functions is newly constructed, it is of great interest to investigate whether it is essentially new, i.e.,  it is CCZ-inequivalent to all known infinite classes of APN functions.
On the other hand, it is very difficult to theoretically prove two APN functions are CCZ-inequivalent. 
Therefore, a common practice of examining the CCZ-equivalence of two functions 
$f$ and $g$ is to evaluate the code isomorphism for small parameters. 
To be more concrete, for a function over $\gf_{{2^n}}$, an associated linear code $\mathcal{C}_f$ is defined by the following generating matrix }
$$\mathcal{C}_f = \begin{pmatrix}
1 & 1 & \cdots & 1 \\
0 & u & \cdots & u^{2^n-1} \\
f(0) & f(u) & \cdots & f\left( u^{2^n-1} \right)
\end{pmatrix}$$
where $u$ is a primitive element of $\gf_{{2^n}}$. Two functions $f$ and $g$ are CCZ-equivalent if and only if $\mathcal{C}_f$ and $\mathcal{C}_g$ are isomorphic. 
Another common method is to compute the CCZ-invariants, i.e., properties that remain invariant under CCZ-equivalence of two functions in a fixed finite field. {If two functions in question exhibit different values and/or properties regarding certain CCZ-invariant,  then they must be CCZ-inequivalent. One CCZ-invariant is the $\Gamma$-rank \cite{EdelP09}, which is defined as the rank of the incidence matrix of a design $\mathrm{dev}(G_f)$ for a function $f$ over $\gf_{2^n}$, whose set of points is $\gf_{2^n}^2$ and set of blocks is $\left\{ (z+a,f(z)+b): z \in\gf_{2^n} \right\}$ for $a,b\in\gf_{2^n}$. }

In this section, we explain our newly constructed APN families are CCZ-inequivalent to all known APN classes by comparing the $\Gamma$-ranks of the representatives from all known APN families and ours over $\gf_{2^8}$ and $\gf_{{2^9}}$, see Tables \ref{gamma-rank1} and \ref{gamma-rank2}\footnote{ In Tables \ref{gamma-rank1} and \ref{gamma-rank2}, the $\Gamma$-ranks of the representatives from all known APN families are retrieved from the website https://boolean.h.uib.no/mediawiki/index.php/Tables}. 
In Table \ref{gamma-rank1} (resp. Table \ref{gamma-rank2}), $u$ and $v$ denote any primitive element of $\gf_{2^8}$ (resp. $\gf_{{2^9}}$)  and $\gf_{2^4}$, respectively. The ``Ref." denotes  the corresponding known APN families in Tables \ref{APN monomials}, \ref{APN polynomials} and \ref{APN bivariate}. The different values of $\Gamma$-ranks show that our newly found APN infinite  class is new.  Another fact worth mentioning is that by the code isomorphism,  our first APN family in \eqref{Eq-APN1} covers one (No. 1.9) of  APN functions over $\gf_{{2^8}}$ in \cite[Table 9]{EdelP09}, which were obtained through the switching method by Edel and Pott. The covered APN function is
$$p(z) = z^3+ u \tr_8\left( u^{63} z^3+ u^{252}z^9 \right) +u^{154} \tr_8\left( u^{68}z^3+u^{235} z^{9}\right) + u^{35} \tr_8 \left( u^{216}z^3+u^{116} z^{9}\right),$$ 
where $u$ is a primitive element of $\gf_{{2^8}}.$

\begin{table*}[!htbp]
	\caption{ CCZ-inequivalent representatives from the known APN families and ours over $\gf_{2^8}$ and their $\Gamma$-ranks  } \label{gamma-rank1}
	\centering
	\begin{tabular}{c c cc }	
		\toprule
		No.&	Function  & $\Gamma$-rank & Ref. \\
		\midrule
		1 & $z^3$ & 11818 & Gold \\
		2& $z^9$ & 12370 & Gold \\
		3& $z^{57}$ & 15358 & Kasami \\
		4& $z^3+z^{17}+u^{48}z^{18}+z^3x^{33}+uz^{34}+z^{48}$ & 13200 & F3 \\
		5& $z^3+\tr_8(z^9)$ & 13800 & F4 \\
		6& $z^3+u^{-1}\tr_8(u^3z^9) $ & 13842 & F4 \\
		7& $ (xy, x^3+vy^{12}) $ & 13642 & F13 \\
		8& $ (xy, x^{12}+x^4y^2+y^3) $ & 13700 & F14 \\
		9& $(xy, x^{12}+x^4y^2+v^7y^3)$ & 13798 & F14 \\
		10& $(x^3+xy^2+y^3, x^5+x^4y+y^5)$  & 13642&  F16 \\
		11& $(xy, x^3+x^2y+vx^4y^8+v^5y^3)$ & 13960 &  F15 \\
		12& $(x^3+xy^2+y^3+xy, x^5+x^4y+y^5+xy+x^2y^2)$ & 14034 & this paper  \\
		\bottomrule
	\end{tabular}
\end{table*}

\begin{table*}[!htbp]
	\caption{ CCZ-inequivalent representatives from the known APN families and ours over $\gf_{2^9}$ and their $\Gamma$-ranks  } \label{gamma-rank2}
	\centering
	\begin{tabular}{c c cc }	
		\toprule
		No.&	Function  & $\Gamma$-rank & Ref. \\
		\midrule
		1 & $z^3$ & 38470 & Gold \\
		2& $z^5$ & 41494 & Gold \\	 
		3& $z^{17}$ & 38470 & Gold \\
		4& $z^{13}$ & 58676 & Kasami \\
		5& $z^{241}$ & 61726 & Kasami \\
		6& $z^{19}$ & 60894 & Welch \\
		7& $z^{255}$ & 130816 & Inverse \\
		8& $z^3+\tr_9(z^9)$ & 47890 & F4 \\
		9 & $z^3+\tr_3^9(z^9+z^{18}) $ & 48428 & F5 \\
		10 & $z^3+\tr_3^9(z^{18}+z^{36}) $ & 48460 & F5 \\
		11& $z^3+u^{246}z^{10}+u^{47}z^{17}+u^{181}z^{66}+u^{428}z^{129}$ & 48596 & F10 \\
		12& $(z^{16}+u^5z^2+z)^9+u^{73}z^9$ & 48558 & this paper\\ 
		\bottomrule
	\end{tabular}
\end{table*}

\section{A new infinite class of APN functions over $\gf_{{2^m}}^2$}

\label{new APN}

In this section, we will {prove that the function in \eqref{Eq-APN1} is APN}. Before that, we first give several useful lemmas. The following one is to determine the number of solutions of cubic equations over $\gf_{{2^m}}$.
\begin{Lemma}
	\cite{williams1975note}
	\label{cubic}
	Let $a,b\in\gf_{2^m}^{*}$ and define 
	$$f(z)=z^3+az+b, h(t)=t^2+bt+a^3.$$ 
	Let $t_1, t_2$ be two solutions of $h(t)$. Then 
	\begin{itemize}
		\item $f$ has three zeros in $\gf_{2^m}$ if and only if $\tr_m\left(\frac{a^3}{b^2}\right)=\tr_m(1)$, $t_1$ and $t_2$ are cubes in $\gf_{2^m}$ (resp. $\gf_{2^{2m}}$) when $m$ is even (resp. odd).
		\item $f$ has exactly one zero in $\gf_{2^m}$ if and only if $\tr_m\left(\frac{a^3}{b^2}\right)\neq\tr_m(1)$.
		\item $f$ has no zeros in $\gf_{2^m}$ if and only if $\tr_m\left(\frac{a^3}{b^2}\right)=\tr_m(1)$, $t_1$ and $t_2$ are not cubes in $\gf_{2^m}$ (resp. $\gf_{2^{2m}}$) when $m$ is even (resp. odd).
	\end{itemize}
\end{Lemma}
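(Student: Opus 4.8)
The plan is to strip the cubic down to the quadratic $h$ by a Cardano-type substitution and then to read off the number of roots of $f$ in $\gf_{2^m}$ from where the roots of $h$, and their cube roots, live. First, since $a\neq0$ we have $\gcd(f,f')=\gcd(z^3+az+b,\,a)=1$, so $f$ is separable; hence over $\gf_{2^m}$ it factors either as an irreducible cubic ($0$ roots), a linear factor times an irreducible quadratic ($1$ root), or three linear factors ($3$ roots). The three hypotheses of the statement — (a) $\tr_m(a^3/b^2)\neq\tr_m(1)$; (b) $\tr_m(a^3/b^2)=\tr_m(1)$ with $t_1,t_2$ cubes in the stated field; (c) $\tr_m(a^3/b^2)=\tr_m(1)$ with $t_1,t_2$ non-cubes there — partition all cases, so it suffices to show that they force exactly $1$, $3$, and $0$ roots respectively; the three equivalences then follow.

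The main device is the characteristic-$2$ identity $(y_1+y_2)^3=y_1^3+y_2^3+y_1y_2(y_1+y_2)$. Thus if $y_1y_2=a$ and $y_1^3+y_2^3=b$ then $z=y_1+y_2$ is a root of $f$; conversely, for any root $z$ the roots $y_1,y_2$ of $Y^2+zY+a$ satisfy $y_1y_2=a$ and $y_1^3+y_2^3=(y_1+y_2)\big((y_1+y_2)^2+y_1y_2\big)=z(z^2+a)=b$. Hence $y_1^3,y_2^3$ are precisely the two roots $t_1,t_2$ of $h(t)=t^2+bt+a^3$ (note $t_1t_2=a^3\neq0$, so $t_1,t_2\neq0$), and, writing $\omega$ for a primitive cube root of unity, the three roots of $f$ in an algebraic closure of $\gf_2$ are $y_1+y_2$, $\omega y_1+\omega^2 y_2$, $\omega^2 y_1+\omega y_2$ (each such pair $(u,v)$ again has $uv=a$ and $u^3+v^3=b$). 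So everything reduces to: where do $t_1,t_2$ lie, and can one pick cube roots of them with product $a$ so that the combinations $\omega^i y_1+\omega^j y_2$ land in $\gf_{2^m}$?

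For the location of $t_1,t_2$: the substitution $t=b\tau$ turns $h(t)=0$ into $\tau^2+\tau=a^3/b^2$, solvable in $\gf_{2^m}$ iff $\tr_m(a^3/b^2)=0$; so $h$ splits over $\gf_{2^m}$ iff $\tr_m(a^3/b^2)=0$, and otherwise $t_1,t_2\in\gf_{2^{2m}}\setminus\gf_{2^m}$ are Frobenius conjugates. Since $\tr_m(1)$ is $0$ for even $m$ and $1$ for odd $m$, the equality $\tr_m(a^3/b^2)=\tr_m(1)$ of (b), (c) says exactly that $h$ splits over $\gf_{2^m}$ when $m$ is even and that $h$ is irreducible when $m$ is odd; in either situation $t_1,t_2$ lie in the field $K$ equal to $\gf_{2^m}$ for even $m$ and to $\gf_{2^{2m}}$ for odd $m$, which is the field named in the cube condition. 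The parity matters because $\gcd(3,2^m-1)$ and $\gcd(3,2^m+1)$ are $3,1$ for even $m$ and $1,3$ for odd $m$: cubing is $3$-to-$1$ on $\gf_{2^m}^{*}$ for even $m$ and a bijection for odd $m$ (and dually on $\gf_{2^{2m}}^{*}$), while the cube roots of unity lie in $K$.

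It then remains, in each case, to fix a good pair $(y_1,y_2)$ and count via the Frobenius $x\mapsto x^{2^m}$ how many of $y_1+y_2$, $\omega y_1+\omega^2 y_2$, $\omega^2 y_1+\omega y_2$ lie in $\gf_{2^m}$. In case (b): pick a cube root $y_1\in K$ of $t_1$, and, spending the cube root of unity available in $K$, adjust $y_1$ together with its companion $y_2$ (a cube root of $t_2$ in $K$ for even $m$, or $y_2=y_1^{2^m}$ for odd $m$) so that $y_1y_2=a$ — possible since the candidate product is a cube root of $a^3$, i.e. $a\omega^j$; a short Frobenius check then shows all three combinations are fixed by $x\mapsto x^{2^m}$, so $f$ has $3$ roots. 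In case (c): any cube root of $t_1$ generates a genuine degree-$3$ extension of $K$, which forces $z=y_1+y_2$ to have degree $3$ over $\gf_{2^m}$ — a smaller degree would produce a root of $f$ in $\gf_{2^m}$, hence a cube root of $t_1$ inside $K$, contradicting non-cubicity — so $f$ is irreducible: $0$ roots. In case (a): here the companion field is $K=\gf_{2^{2m}}$ for even $m$ and $K=\gf_{2^m}$ for odd $m$, and $t_1$ is automatically a cube in $K$ (immediate for odd $m$ since cubing is bijective on $\gf_{2^m}^{*}$; a short exponent computation using $t_1^{2^m+1}=t_1t_2=a^3$ for even $m$), so the construction of (b) yields one root, while the Frobenius check now shows the other two combinations are swapped by $x\mapsto x^{2^m}$ and thus not in $\gf_{2^m}$: exactly $1$ root. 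The step I expect to be the main obstacle is exactly this cube-root bookkeeping — securing a cube root of product exactly $a$ by correctly spending the cube root of unity in $K$, and turning ``$t_1$ a non-cube'' into ``$f$ irreducible'' by a clean degree argument that also excludes the one-root outcome — together with carrying the two parities of $m$ consistently throughout.
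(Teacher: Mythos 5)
The paper offers no proof of this lemma to compare against: it is imported verbatim from Williams \cite{williams1975note}. Your sketch is the classical argument for that result (characteristic-$2$ Cardano: pass to the resolvent quadratic $h$, pick cube roots $y_1,y_2$ of $t_1,t_2$ with $y_1y_2=a$, and track where the combinations $y_1+y_2$, $\omega y_1+\omega^2y_2$, $\omega^2y_1+\omega y_2$ sit under the Frobenius $x\mapsto x^{2^m}$), and its skeleton is sound; in particular the reduction of the three ``if and only if'' statements to three one-way implications is legitimate, because $t_1t_2=a^3$ is a cube, so $t_1,t_2$ are simultaneously cubes or non-cubes in the relevant field and the three hypotheses really partition all cases. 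A few points need repair or filling in. First, the separability computation is wrong as written: in characteristic $2$ one has $f'(z)=3z^2+a=z^2+a$, not $a$; a common root would satisfy $z^2=a$ and hence $f(z)=b\neq 0$, so it is $b\neq0$ (not $a\neq0$) that gives separability, and the same $b\neq0$ gives $t_1\neq t_2$, which you need to see that the three combinations above are pairwise distinct and therefore exhaust the roots of $f$. Second, the cube-root bookkeeping does close, but not quite as you describe: for odd $m$ one takes $y_2=y_1^{2^m}$, and then $y_1y_2$ is a norm, hence lies in $\gf_{2^m}$, so being a cube root of $a^3$ it equals $a$ automatically (twisting $y_1$ by $\omega$ multiplies the product by $\omega^{1+2^m}=1$, so ``spending the cube root of unity'' cannot help there), whereas for even $m$ the twist does cycle the product through $a,a\omega,a\omega^2$ and the adjustment works as you say. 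Third, in case (c) with $m$ even, a hypothetical root $z\in\gf_{2^m}$ only hands you a cube root of $t_1$ in $\gf_{2^{2m}}$, not ``inside $K=\gf_{2^m}$'' as you write; the clean fix is the degree argument you invoke in the same breath (when $t_1$ is a non-cube in $K$, $Y^3-t_1$ is irreducible over $K$, so every cube root of $t_1$ has degree $3$ over $\gf_{2^m}$, while the $y_1$ attached to $z$ satisfies the quadratic $Y^2+zY+a$ over $\gf_{2^m}$), or equivalently the remark that $\gcd(3,2^m+1)=1$ forces non-cubes of $\gf_{2^m}$ to stay non-cubes in $\gf_{2^{2m}}$. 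With these details supplied, your proof is complete and is essentially the standard one from the literature the paper cites.
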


The following two lemmas are very important for our proof.

\begin{Lemma}
	\label{cube}
	Let $\gcd(m,3)=1$ and $\omega\in\gf_{2^2}\backslash\gf_2$. Then $\omega$ is not cubic in $\gf_{2^m}$ (resp. $\gf_{2^{2m}}$) if $m$ is even (resp. odd).
\end{Lemma}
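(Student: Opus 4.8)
The plan is to reduce everything to a divisibility statement about $2^N-1$, where $\gf_{2^N}$ denotes the field in which $\omega$ must be shown non-cubic: thus $N=m$ when $m$ is even and $N=2m$ when $m$ is odd. Note that in either case $N$ is even, and that the hypothesis $\gcd(m,3)=1$ forces $3\nmid N$ (indeed $N\in\{m,2m\}$ and $3\nmid 2m$ whenever $3\nmid m$). First I would record the basic fact about $\omega$: since $\omega\in\gf_{2^2}\setminus\gf_2=\gf_{2^2}^{*}\setminus\{1\}$ and $|\gf_{2^2}^{*}|=3$, the element $\omega$ has multiplicative order exactly $3$, i.e. $\omega^3=1$ and $\omega\neq1$.

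Next I would use that $N$ is even to make ``cubic'' a meaningful condition: $2\mid N$ implies $3\mid 2^N-1$, so the cube map on $\gf_{2^N}^{*}$ is $3$-to-$1$, and its image — the set of nonzero cubes — is the unique index-$3$ subgroup $\{c\in\gf_{2^N}^{*}:c^{(2^N-1)/3}=1\}$. Hence $\omega$ is cubic in $\gf_{2^N}$ iff $\omega^{(2^N-1)/3}=1$; since $\operatorname{ord}(\omega)=3$, this is equivalent to $3\mid\frac{2^N-1}{3}$, i.e. to $9\mid 2^N-1$.

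Finally I would show $9\nmid 2^N-1$. The multiplicative order of $2$ modulo $9$ is $6$ (the powers $2,4,8,16\equiv7,32\equiv5,64\equiv1$ show $2^6\equiv1$ is the first trivial one), so $9\mid 2^N-1$ if and only if $6\mid N$. But $3\nmid N$, so certainly $6\nmid N$, whence $\omega^{(2^N-1)/3}\neq1$ and $\omega$ is not cubic in $\gf_{2^N}$. Translating back through the case split $N=m$ ($m$ even) and $N=2m$ ($m$ odd) gives exactly the claimed statement.

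The proof is elementary and short; there is no real obstacle, only two points that must not be skipped. The first is to observe at the outset that $N$ is even, since otherwise cubing is a bijection on $\gf_{2^N}^{*}$ and the assertion would be false; the second, and the only computation that does any work, is the determination $\operatorname{ord}_9(2)=6$, which is what converts the non-cubicity of $\omega$ into the clean arithmetic condition $3\nmid N$.
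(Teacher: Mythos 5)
Your proof is correct and follows essentially the same route as the paper's: reduce the cubicity of $\omega$ (an element of order $3$) to the condition $9\mid 2^N-1$ and derive a contradiction with $\gcd(m,3)=1$. You are merely more explicit than the paper — treating both parities uniformly via $N\in\{m,2m\}$ and justifying the step $9\mid 2^N-1\Leftrightarrow 6\mid N$ by computing $\operatorname{ord}_9(2)=6$, details the paper leaves implicit (it writes out only the even case).
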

\begin{proof}
	We only consider the case where $m$ is even. It suffices to show that the equation $z^3=\omega$ has no solution in $\gf_{{2^m}}$. If not, then $\omega^{\frac{2^m-1}{3}}=1,$ which means that $3\mid {\frac{2^m-1}{3}}$, i.e., $9 \mid {2^m-1}$. It is a contradiction with the fact $\gcd(m,3)=1$.
\end{proof}


\begin{Lemma}
	\label{Z^{2^j+1}+Z+1=0}
	Let $\gcd(3,m)=1$. Then $$z^3+z+1=0$$ has no solution in $\gf_{2^m}$. 
\end{Lemma}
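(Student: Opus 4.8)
The plan is to exploit the fact that $z^3+z+1$ is, up to nothing, \emph{the} irreducible cubic over $\gf_2$ with nonzero constant term, so its roots live in $\gf_{2^3}$ and in no proper subfield thereof. Concretely, I would first check that $z^3+z+1$ has no root in $\gf_2$ (substituting $z=0$ and $z=1$ both give the value $1$); since a polynomial of degree $3$ without a root in the base field is irreducible, $z^3+z+1$ is irreducible over $\gf_2$. Hence any root $\alpha$ generates $\gf_2(\alpha)=\gf_{2^3}$, so $\alpha\in\gf_{2^m}$ would force $\gf_{2^3}\subseteq\gf_{2^m}$, i.e. $3\mid m$, contradicting $\gcd(3,m)=1$. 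This finishes the proof in essentially one line once irreducibility is recorded.

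As an alternative that stays inside the toolkit already assembled, I would instead apply Lemma \ref{cubic} with $a=b=1$. There $h(t)=t^2+t+1$, whose two roots $t_1,t_2$ are the primitive cube roots of unity, i.e. $\omega$ and $\omega^2$ with $\omega\in\gf_{2^2}\backslash\gf_2$. The trace hypothesis appearing in the first and third bullets, $\tr_m\!\left(\frac{a^3}{b^2}\right)=\tr_m(1)$, here reads $\tr_m(1)=\tr_m(1)$ and so holds automatically; thus $z^3+z+1$ has either three or zero roots in $\gf_{2^m}$. By Lemma \ref{cube}, $\omega$ is non-cubic in $\gf_{2^m}$ when $m$ is even and in $\gf_{2^{2m}}$ when $m$ is odd, which is exactly the hypothesis of the third bullet of Lemma \ref{cubic}; this rules out the three-root case and leaves precisely the ``no zeros'' conclusion.

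There is no genuinely hard step here. In the second route the only points needing a little care are to identify the roots of $h(t)=t^2+t+1$ correctly as the two elements of $\gf_{2^2}\backslash\gf_2$, and to invoke the branch of Lemma \ref{cube} matching the parity of $m$ (for even $m$ one works in $\gf_{2^m}$, for odd $m$ in $\gf_{2^{2m}}$, where $\gf_{2^2}$ embeds since $2\mid 2m$). I would present the irreducibility argument as the main proof for brevity and mention the Lemma \ref{cubic} derivation as a short remark, since the latter is the version that mirrors how the companion facts are used in the subsequent APN analysis.
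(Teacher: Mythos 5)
Your main argument is correct and is essentially the paper's own proof: the paper likewise observes that any root of $z^3+z+1$ lies in $\gf_{2^3}$, hence (by $\gcd(3,m)=1$) in $\gf_{2^3}\cap\gf_{2^m}=\gf_2$, and checks that neither $0$ nor $1$ is a root — your irreducibility phrasing is just a mild rearrangement of the same subfield argument. Your alternative route via Lemma \ref{cubic} with $a=b=1$ and Lemma \ref{cube} is also sound (both roots of $t^2+t+1$ lie in $\gf_{2^2}\backslash\gf_2$, so Lemma \ref{cube} applies to each), though it uses heavier machinery than the statement requires.
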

\begin{proof}
	If there exists some element $z\in\gf_{{2^m}}$ such that $z^3+z+1=0$, then $z\in\gf_{2^3}$. Since $\gcd(3,m)=1$, $z\in\gf_{2^{\gcd(3,m)}}=\gf_2$. However, it is clear that for $z=0$ or $1$, $z^3+z+1\neq0.$
\end{proof}

Since  the resultant of polynomials will be used in our proof, we now recall some basic facts about the resultant of two polynomials. Given two non-zero polynomials of degrees $n$ and $m$ respectively
$$ u(x) = a_mx^m+a_{m-1}x^{m-1}+\cdots+a_0$$ and $$ v(x) = b_nx^n+b_{n-1}x^{n-1}+\cdots+b_0 $$
with $a_m\neq0, b_n\neq0$ and coefficients in a field or in an integral domain ${R}$, their resultant $\mathrm{Res}(u,v)\in {R}$ is the determinant of the following matrix:
$$ \small \begin{pmatrix} 
	a_m & a_{m-1} & \cdots & \cdots & \cdots & \cdots & a_0 & 0 & 0 & 0 \\
	0 & a_m & a_{m-1} & \cdots & \cdots & \cdots & \cdots &  a_0 & 0 & 0 \\
	&  & \ddots & \ddots & & & &  &   \ddots &      \\
	0 & 0 & 0 & a_m & a_{m-1} & \cdots & \cdots & \cdots & \cdots & a_0 \\
	b_n & b_{n-1} & \cdots & \cdots & b_0 & 0 & 0 & 0 & 0 & 0\\
	& \ddots & \ddots &  & & \ddots & &  &   &      \\ 
	0   & 0 & 0 & 0 & 0 &  b_n & b_{n-1} &  \cdots &   \cdots &  b_0    \\
\end{pmatrix}.
$$
For a field $K$ and two polynomials $F(x,y), G(x,y) \in K[x,y]$, we use $ \mathrm{Res}(F,G,y)$ to denote the resultant  of $F$ and $G$ with respect to $y$. It is the resultant of $F$ and $G$ when considered as polynomials in the single variable $y$. In this case, $ \mathrm{Res}(F,G,y)\in K[x]$ belongs to the ideal generated by $F$ and $G$, and thus any $a,b$ satisfying $F(a,b)=0$ and $G(a,b)=0$ is such that $\mathrm{Res}(F,G,y)(a)=0$ (see \cite{LN1997}). 

The main result in this section is as follows.


\begin{Th}
	Let $m$ be a positive integer with $\gcd(3,m)=1$ and $$f(x,y)=\left( x^3+xy^2+y^3+xy, x^5+x^4y+y^5+xy+x^2y^2 \right).$$ Then $f(x,y)$ is APN over $\gf_{2^m}^2$. 
\end{Th}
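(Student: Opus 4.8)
The plan is to exploit that $f$ is a quadratic function (every term has algebraic degree $2$) with $f(0,0)=(0,0)$: by the criterion recalled in the Introduction, applied verbatim with $\gf_{2^n}$ replaced by $\gf_{2^m}^2$, it is APN precisely when, for every $(a,b)\in\gf_{2^m}^2\setminus\{(0,0)\}$, the equation $f(x+a,y+b)+f(x,y)+f(a,b)=(0,0)$ has only the solutions $(x,y)=(0,0)$ and $(x,y)=(a,b)$. First I would expand the two coordinates; in characteristic $2$ the quadratic parts cancel and the system becomes
\begin{align}
& ax^2+(a^2+b^2+b)\,x+(a+b)\,y^2+(a+b^2)\,y=0, \label{appn1}\\
& (a+b)\,x^4+b^2x^2+(a^4+b)\,x+b\,y^4+a^2y^2+(a^4+b^4+a)\,y=0. \label{appn2}
\end{align}
Before attacking \eqref{appn1}--\eqref{appn2} directly I would record a reduction that also explains the ``adding terms'' philosophy. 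Let $g(x,y)=(x^3+xy^2+y^3,\,x^5+x^4y+y^5)$ be the G\"{o}lo\v{g}lu function F16 with $i=1$, which is APN since $\gcd(3,m)=1$, and put $r:=bx+ay$. The terms added to $g$ to obtain $f$ are $xy$ in the first coordinate and $xy+x^2y^2=xy+(xy)^2$ in the second, and their joint contribution to the difference map at $(a,b)$, evaluated at $(x,y)$, is exactly $(r,\,r+r^2)$. Hence \eqref{appn1}--\eqref{appn2} says precisely that the difference of $g$ at $(a,b)$, evaluated at $(x,y)$, equals $(r,\,r+r^2)$; in particular, if $r=0$ then $(x,y)$ annihilates the difference of $g$, and the APN-ness of $g$ forces $(x,y)\in\{(0,0),(a,b)\}$ (and $r$ does vanish at both of these). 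It therefore remains to show that \eqref{appn1}--\eqref{appn2} has \emph{no} solution with $r=bx+ay\neq0$.

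From here the plan is to eliminate a variable. First I would dispose of the degenerate configurations (for instance $a=b$, where \eqref{appn1} has no $y^2$ term, and $a=0$), handling each by a direct computation. In the remaining range, \eqref{appn1} lets me express $y^2$, hence also $y^4$, as a polynomial in $x$ plus a term linear in $y$; substituting into \eqref{appn2} makes both equations linear in $y$, and taking the resultant with respect to $y$ produces a polynomial $R(x)\in\gf_{2^m}[x]$, of degree bounded independently of $m$, that is divisible by $x(x+a)$ (the factor $x$ accounting for the solution $(0,0)$ and $x+a$ for $(a,b)$). I would then show that $R(x)/\bigl(x(x+a)\bigr)$ has no root in $\gf_{2^m}$ lifting to a solution with $r\neq0$. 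I expect this residual factor to split into pieces of degree at most $3$: for a genuine cubic $x^3+Ax+B$ the number of roots is read off from the trace-and-cube criterion of Lemma~\ref{cubic} (its cube part supplied by Lemma~\ref{cube}), while factors of the form $x^3+x+1$ or its reciprocal $x^3+x^2+1$ have no root in $\gf_{2^m}$ by Lemma~\ref{Z^{2^j+1}+Z+1=0}; all of this uses $\gcd(3,m)=1$ crucially. As a consistency check, when $(a,b)=(1,0)$ the elimination collapses to $x^8=x$, i.e.\ $x(x+1)(x^3+x+1)(x^3+x^2+1)=0$, whose only zeros in $\gf_{2^m}$ are $0$ and $1$, precisely because $\gcd(3,m)=1$.

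The hard part will be this last step: bounding the $\gf_{2^m}$-roots of the residual polynomial uniformly over all admissible $(a,b)$. The difficulty is twofold --- first, carrying out the elimination so that $R(x)$ factors in an exploitable way (the property recalled above, that $\mathrm{Res}(F,G,y)$ lies in the ideal generated by $F$ and $G$, is what keeps this rigorous), and second, organising the case analysis --- $a=0$, $a=b$, and the subcases of the remaining range according to which auxiliary expressions vanish --- so that in every branch the surviving polynomial is a product of cubics and lower-degree factors to which Lemmas~\ref{cubic}, \ref{cube} and~\ref{Z^{2^j+1}+Z+1=0} apply. I do not expect any individual computation to be deep; the work lies in the bookkeeping and in finding the algebraic identities that make each residual factor visibly unsolvable once $\gcd(3,m)=1$.
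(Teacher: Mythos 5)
Your strategy is essentially the paper's: the same two-coordinate difference system, elimination of $y$ via the resultant, and an analysis of the residual cubic factors through Lemmas \ref{cubic}, \ref{cube} and \ref{Z^{2^j+1}+Z+1=0} (your consistency check at $(a,b)=(1,0)$ is correct). The reframing through the G{\"o}lo{\v{g}}lu family F16 is a valid observation not made in the paper --- the added terms do contribute exactly $(r,\,r+r^2)$ with $r=bx+ay$, so the known APN-ness of F16 disposes of all solutions with $r=0$ --- but you do not use it to finish; you return to the elimination, so it only repackages the remaining task of excluding $r\neq0$, which is the whole difficulty.

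The genuine gap is that the decisive computations and identities are left open, and they are precisely where the proof lives. The paper's argument rests on three concrete facts that your plan only anticipates: (1) the resultant factors as $\left(a^3+ab^2+b^3\right)^2\,x(x+a)\,H(x,a,b)\,H(x+a,a,b)$ with a single cubic $H(x,a,b)=x^3+Ax+B$, where $A=a^2+ab+a+b^2+b+1$ and $B=a^3+a^2b+a+b^3+b^2+1$, the prefactor being nonzero by Lemma \ref{Z^{2^j+1}+Z+1=0}; (2) $B=0$ holds only for $(a,b)=(1,1)$ (again via Lemma \ref{Z^{2^j+1}+Z+1=0}), and this --- not $a=0$ or $a=b$, the cases you single out for special treatment --- is the one exceptional pair, which the paper settles by a direct computation on the original system; (3) for $(a,b)\neq(1,1)$ the resolvent quadratic $t^2+Bt+A^3$ of $H$ has the root $t_1=C+\omega B=\omega\left(a+\omega b+\omega^2\right)^3$, with $C=a^2b+a^2+ab^2+a+b^2+b$ and $\omega\in\gf_{2^2}\setminus\gf_2$, i.e.\ $\omega$ times a cube, hence a non-cube by Lemma \ref{cube}, so $H$ has no root in $\gf_{2^m}$ by Lemma \ref{cubic}. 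Item (3) is exactly the ``algebraic identity that makes the residual factor visibly unsolvable'' which you acknowledge not having; without it (or a substitute) the uniform root count over all admissible $(a,b)$ is not established, and the claim that the residual factor ``splits into pieces of degree at most $3$ of the form $x^3+x+1$ or similar'' is unsupported --- in general it is one cubic $H$ and its shift, whose coefficients vary with $(a,b)$. Two smaller points: after concluding $x\in\{0,a\}$ one must still recover $y\in\{0,b\}$ from the system, and your weaker goal of ``no root lifting to a solution with $r\neq0$'' is not what gets proved; away from $(1,1)$ the cubic has no $\gf_{2^m}$-roots at all. As it stands, the proposal is a sound plan with the central step missing rather than a proof.
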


\begin{proof}
	It suffices to show that for any $(a,b)\neq(0,0) \in \gf_{2^m}^2$, the equation 
	\begin{equation}
	\label{Eq_APN}
	f(x+a,y+b)+f(x,y)+f(a,b)=0
	\end{equation} 
	has exactly two solutions $(x,y)=(0,0), (a,b)$ in $\gf_{2^m}^2.$ By {a simple calculation}, Eq. \eqref{Eq_APN} is equivalent to the following equation system 
	\begin{subequations} 
		\renewcommand\theequation{\theparentequation.\arabic{equation}}  
		\label{eq_1}  
		\begin{empheq}[left={\empheqlbrace\,}]{align}
		& ax^2+\left(a^2+b^2+b\right)x + (a+b)y^2 + \left(a+b^2\right) y = 0 \label{eq_1_1} \\ 
		& (a+b)x^4+b^2x^2+\left(a^4+b\right)x + by^4+a^2y^2+ \left( a^4+a+b^4\right) y = 0.  \label{eq_1_2} 
		\end{empheq}
	\end{subequations}

	Let $$F(x,y,a,b)= ax^2+\left(a^2+b^2+b\right)x + (a+b)y^2 + \left(a+b^2\right) y$$ and $$G(x,y,a,b) = (a+b)x^4+b^2x^2+\left(a^4+b\right)x + by^4+a^2y^2+ \left( a^4+a+b^4\right) y.$$ 
	{Then the resultant of $F$ and $G$ aiming at $y$ equals } \begin{equation}
	\label{resultant}
	\mathrm{Res}(F,G,y) = \left(a^3+ab^2+b^3\right)^2x(x+a)H(x,a,b)H(x+a,a,b),
	\end{equation}
	where 
	$$H(x,a,b) = x^3 + (a^2 + ab + a + b^2 + b + 1)x + a^3 + a^2b + a + b^3 + b^2 + 1.$$
	First of all, we have $a^3+ab^2+b^3\neq 0$ for any $(a,b)\neq(0,0) \in \gf_{2^m}^2$. If not, then for some element  $(a,b)\neq(0,0) \in \gf_{2^m}^2$, $a^3+ab^2+b^3=0$. If $b=0$, then the above equation becomes $a^3=0$, which is a contradiction with the fact $(a,b)\neq(0,0)$. If $b\neq0$, then we have $c^3+c+1=0$, where $c=\frac{a}{b}\in\gf_{2^m}$, which is {in contradiction} with Lemma \ref{Z^{2^j+1}+Z+1=0}. In addition, according to Eqs. \eqref{eq_1}, i.e., $F(x,y,a,b)=G(x,y,a,b)=0$, we have $x=0, a$ or $H(x,a,b)=0$. Now we consider the equation 
	\begin{equation}
	\label{eq_3}
	x^3 + A x + B=0,
	\end{equation} 
	where $A = a^2 + ab + a + b^2 + b + 1$ and $B= a^3 + a^2b + a + b^3 + b^2 + 1.$ In the following, we prove that $B=0$ if and only if $(a,b)=(1,1)$. Plugging $a=a_1+b$ into $B=0$ and { simplifying it}, we get
	\begin{equation}
	\label{eq_4}
	a_1^3+(b^2+1)a_1+(b+1)^3=0.
	\end{equation}
	If $b=1,$ then $a_1=0$ and thus $a=a_1+b=1$. If $b\neq1$, plugging $a_1=(b+1)a_2$ into Eq. \eqref{eq_4} and {simplifying it}, we have $a_2^3+a_2+1=0$, which has no solution in $\gf_{2^n}$ from Lemma \ref{Z^{2^j+1}+Z+1=0}. Thus $B=0$ if and only if $(a,b)=(1,1)$. 
	In the following, we divide our proof into two cases: $(a,b)=(1,1)$ and $(a,b)\neq(1,1)$. 
	
	{\bfseries Case 1.} If $(a,b)=(1,1)$, then Eqs. \eqref{eq_1} becomes 
	\begin{subequations} 
		\renewcommand\theequation{\theparentequation.\arabic{equation}}  
		\label{eq_11}  
		\begin{empheq}[left={\empheqlbrace\,}]{align}
		& x^2+x =0 \label{eq_11_1} \\ 
		& x^2+y^4+y^2+y = 0.  \label{eq_11_2} 
		\end{empheq}
	\end{subequations} 
	From Eq. \eqref{eq_11_1}, we known $x=0$ or $1$. If $x=0$, plugging it into Eq. \eqref{eq_11_2}, we have $y^4+y^2+y=0$ and then $y=0$ by Lemma \ref{Z^{2^j+1}+Z+1=0}. If $x=1$, together with Eq. \eqref{eq_11_2}, we get $y^4+y^2+y+1=0$, which means $y=1$. Thus in this case, Eqs. \eqref{eq_1} has two solutions $(x,y)=(0,0)$ and $(1,1)$ in $\gf_{2^m}^2$.
	
	{\bfseries Case 2.} If $(a,b)\neq (1,1)$, then $B\neq0$. Let $h(t) = t^2+Bt+A^3$. By computation, we have 
	$$\frac{A^3}{B^2} = \frac{C}{B}+\frac{C^2}{B^2}+1,$$
	where $$C = a^2b+a^2+ab^2+a+b^2+b.$$
	Thus $\tr_m\left( \frac{A^3}{B^2} +1\right)=0$ and the equation $h(t)=0$ has two solutions $t_1= C+\omega B$ and  $t_2= C+\omega^2 B$ in $\gf_{2^m}$ (resp. $\gf_{2^{2m}}$) if $m$ is even (resp. odd), where $\omega\in\gf_{2^2}\backslash\gf_2$.  Moreover, $t_1=\omega(\omega^2C+B)  = \omega \left(a+\omega b +\omega^2\right)^3$, which is not cubic by Lemma \ref{cube}. Thus from Lemma \ref{cubic}, the equation $x^3+ Ax+ B=0$ has no solution in $\gf_{2^m}$. Hence from Eq. \eqref{resultant}, we have $x=0$ or $a$ and then $y=0$ or $b$, respectively.
	
	All in all, Eqs. \eqref{eq_1} has exactly two solutions $(x,y)=(0,0), (a,b)$ in $\gf_{2^m}^2$ for any $(a,b)\neq (0,0)\in \gf_{2^m}^2$. Therefore, $f$ is APN over $\gf_{2^m}^2.$
\end{proof}

\section{A new infinite class of APN functions over $\gf_{{2^{3m}}}$}
\label{newAPN2}

In this section, we will show the univariate function defined in \eqref{Eq-APN2} is APN, namely the following theorem.

\begin{Th}
	\label{newAPNFq3}
	Let  $\gcd(s,m)=1$ and $v\in\gf_{2^{m}}^{*}$. Choose $\mu\in\gf_{2^{3m}}$ such that $\mu^{2^{2m}+2^m+1}\neq1$ and $ L(z) = z^{2^{m+s}}+\mu z^{2^s}+z$ permutes $\gf_{2^{3m}}$.  Then $f(z) = L(z)^{2^m+1}+vz^{2^m+1}$ is APN over $\gf_{2^{3m}}$.
\end{Th}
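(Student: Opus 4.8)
The plan is to invoke the quadratic APN criterion recalled in the introduction: $f$ is APN on $\gf_{2^{3m}}$ iff for every $a\in\gf_{2^{3m}}^{*}$ the equation $B_f(z,a):=f(z+a)+f(z)+f(a)=0$ has only the solutions $z=0$ and $z=a$. Writing $q=2^m$ and $g(u,w):=uw^{q}+u^{q}w$, expanding $L(z+a)=L(z)+L(a)$ and $(z+a)^{q+1}$ yields
\[
B_f(z,a)=g\bigl(L(z),L(a)\bigr)+v\,g(z,a),
\]
and $z=0,a$ are visibly solutions, so everything reduces to excluding further solutions when $a\neq0$. Throughout I would use that $g(u,w)=0\iff u/w\in\gf_q$ for $w\neq0$, and the identity $L(cz)=c^{2^{s}}L(z)+(c+c^{2^{s}})z$ for $c\in\gf_q$, immediate from $c^{2^m}=c$.

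The key lemma I would establish first is: $L(z)/z\notin\gf_q$ for every $z\in\gf_{2^{3m}}^{*}$. Suppose $L(z_0)=\delta z_0$ with $z_0\neq0$, $\delta\in\gf_q$; then $\delta\neq0$ as $L$ is injective. If $\delta=1$ then $z_0^{2^{s}(2^{m}-1)}=\mu$, but for $z_0\neq0$ the left side lies in the order-$(2^{2m}+2^{m}+1)$ subgroup $H$ of $\gf_{2^{3m}}^{*}$ (it is the $2^{s}$-power of $z_0^{2^{m}-1}$, whose $(q^{2}+q+1)$-th power is $z_0^{q^{3}-1}=1$, and $H$ is stable under $w\mapsto w^{2^{s}}$), whence $\mu^{2^{2m}+2^{m}+1}=1$, contrary to hypothesis. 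So $\delta\neq1$ and $\delta+1\in\gf_q^{*}$; running over the $\gf_q$-line $\{cz_0:c\in\gf_q^{*}\}$ and using the identity above gives $L(cz_0)/(cz_0)=c^{2^{s}-1}(\delta+1)+1$, and since $\gcd(s,m)=1$ forces $\gcd(2^{s}-1,2^{m}-1)=1$ the map $c\mapsto c^{2^{s}-1}$ permutes $\gf_q^{*}$, so this ratio runs over all of $\gf_q\setminus\{1\}$; taking the value $0$ produces a non-zero element killed by $L$, contradicting that $L$ permutes.

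Granting the lemma, suppose $B_f(z,a)=0$ with $a\neq0$, $z\neq0$. If $g(z,a)=0$ then $z=ca$ with $c\in\gf_q^{*}$ and $g(L(z),L(a))=v\,g(z,a)=0$, so $L(ca)=c'L(a)$ with $c'\in\gf_q$; substituting $L(ca)=c^{2^{s}}L(a)+(c+c^{2^{s}})a$ gives $(c'+c^{2^{s}})L(a)=(c+c^{2^{s}})a$, and if $c\neq1$ then $c+c^{2^{s}}\neq0$ and hence $L(a)/a\in\gf_q$, impossible; so $c=1$ and $z=a$. This leaves the case $g(z,a)\neq0$, i.e.\ $t:=z/a\notin\gf_q$, which I expect to be the main obstacle. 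Here I would dehomogenise by setting $r:=L(z)/L(a)$ and $k:=a/L(a)$ and divide by $L(a)^{q+1}$ to get $r+r^{q}=v\,k^{q+1}(t+t^{q})$; this, together with its Frobenius conjugates and with the relation $L(ta)=rL(a)$ written out as a polynomial identity in $a$, should be shown inconsistent for $t\notin\gf_q$. I also expect the auxiliary element $P:=L(a)+va$ to help, since $P\neq0$, $g(a,P)=g(a,L(a))\neq0$ by the lemma, and a short computation gives $B_f(cz,a)=(c+c^{2^{s}})\,g(z,P)$ whenever $B_f(z,a)=0$, which pins down how $\gf_q$-lines meet the solution set. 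Completing this elimination — presumably by a careful argument in the spirit of the resultant computation in the proof of the previous theorem, using $v\in\gf_q^{*}$ and the lemma — is the technical heart of the proof.
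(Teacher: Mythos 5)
Your setup is sound: the reduction to $g\bigl(L(z),L(a)\bigr)+v\,g(z,a)=0$ with $g(u,w)=uw^{2^m}+u^{2^m}w$ is correct, your key lemma that $L(z)/z\notin\gf_{2^m}$ for $z\neq0$ is true and correctly proved (it is equivalent to the paper's Lemma~\ref{LL}, that $z^{2^{m+s}}+\mu z^{2^s}+\beta z$ permutes $\gf_{2^{3m}}$ for every $\beta\in\gf_{2^m}$), and your treatment of the case $z/a\in\gf_{2^m}$ is correct; after the paper's change of variable $z\mapsto az$ that case is precisely its final step, where solutions already known to lie in $\gf_{2^m}$ are pushed into $\gf_2$ via $C+E\neq0$, which is again your lemma in disguise. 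The side identity $B_f(cz,a)=(c+c^{2^s})\,g(z,L(a)+va)$ also checks out.

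However, there is a genuine gap: the case $z/a\notin\gf_{2^m}$, which you defer with ``should be shown inconsistent'' and ``presumably by a careful argument,'' is not a finishing touch but the entire content of the theorem, and your sketch gives no mechanism for it. The single scalar relation $r+r^{q}=v\,k^{q+1}(t+t^{q})$ obtained by dehomogenising does not by itself interact with the hypothesis that $L$ permutes or with $\mu^{2^{2m}+2^m+1}\neq1$ in any visible way. What the paper actually does is rewrite the equation, after substituting $z\mapsto az$, as the $2^s$-linearized equation $Az^{2^{2m+s}}+Bz^{2^{m+s}}+Cz^{2^m}+Dz^{2^s}+Ez=0$ with the explicit coefficients \eqref{ABCDE-new}, adjoin its $2^m$-th and $2^{2m}$-th Frobenius conjugates, and eliminate twice to reach two trinomial relations $U_1z^{2^{2m+s}}+U_2z^{2^{m+s}}+U_3z^{2^s}=0$ and $V_1^{2^s}z^{2^{2m+s}}+V_2^{2^s}z^{2^{m+s}}+V_3^{2^s}z^{2^s}=0$. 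The crux is then Lemma~\ref{key-lemma}: explicit factorizations of $U_1,U_2,U_3$ and $V_1,V_2,V_3$ through the quantities $U$, $V$, $T$, nonvanishing of all of these (using the permutation lemmas for $L$, $L_\beta$ and the adjoint-type companion $L'$), the vanishing of $U_4,V_4$, the identity $U_2V_1^{2^s}+U_1V_2^{2^s}+U_3V_1^{2^s}+U_1V_3^{2^s}=0$, and the nonvanishing $U_2V_1^{2^s}+U_1V_2^{2^s}\neq0$, which together force $z^{2^{m+s}}+z^{2^s}=0$, i.e.\ $z\in\gf_{2^m}$. None of this is routine, and it is exactly where the hypotheses $v\in\gf_{2^m}^{*}$, $\mu^{2^{2m}+2^m+1}\neq1$ and the permutation property are consumed beyond your lemma; without supplying an argument of this kind (or a genuinely different one), the proposal does not prove the theorem.
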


 Before that, we first prove some important lemmas.  

\begin{Lemma}
	\label{LL}
	Let $\gcd(m,s)=1$, $\mu\in\gf_{2^{3m}}$ satisfy $\mu^{2^{2m}+2^m+1}\neq1$ and $L_{\beta}(z) = z^{2^{m+s}}+\mu z^{2^s}+\beta z$ with $\beta \in\gf_{2^{m}}$. If $L_1$ permutes $\gf_{2^{3m}}$, then so does $L_{\beta}$ for any $\beta\in\gf_{2^{m}}$.
\end{Lemma}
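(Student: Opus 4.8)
The plan is to realize every $L_\beta$ as a bijective rescaling of a fixed model, so that the permutation property transfers automatically. Concretely, I will show that for $\beta=0$ the polynomial $L_0$ is a permutation outright (using the condition on $\mu$), and that for $\beta\neq 0$ the polynomial $L_\beta$ is equivalent to $L_1$ under the substitutions $z\mapsto\gamma z$ on the input and $z\mapsto\gamma^{2^s}z$ on the output, for a suitable $\gamma\in\gf_{2^m}^{*}$.

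First I would treat $\beta=0$. Writing $L_0(z)=z^{2^{m+s}}+\mu z^{2^s}=\bigl(z^{2^m}+\mu z\bigr)^{2^s}$, it suffices to prove that $M(z):=z^{2^m}+\mu z$ permutes $\gf_{2^{3m}}$. If $M(z)=0$ for some $z\neq 0$, then $z^{2^m-1}=\mu$; since $2^{3m}-1=(2^m-1)(2^{2m}+2^m+1)$, the image of $x\mapsto x^{2^m-1}$ on $\gf_{2^{3m}}^{*}$ is the subgroup of order $2^{2m}+2^m+1$, so such a $z$ exists exactly when $\mu^{2^{2m}+2^m+1}=1$. This is excluded by hypothesis, hence $M$, and therefore $L_0$, is a permutation.

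The main step is $\beta\neq 0$. From $\gcd(s,m)=1$ we get $\gcd(2^s-1,2^m-1)=2^{\gcd(s,m)}-1=1$, so $x\mapsto x^{2^s-1}$ is a bijection of $\gf_{2^m}^{*}$ and we may pick $\gamma\in\gf_{2^m}^{*}$ with $\gamma^{2^s-1}=\beta$. Since $\gamma\in\gf_{2^m}$ gives $\gamma^{2^m}=\gamma$, hence $\gamma^{2^{m+s}}=\gamma^{2^s}$ and $\beta\gamma=\gamma^{2^s}$, a direct substitution yields $L_\beta(\gamma z)=\gamma^{2^s}\bigl(z^{2^{m+s}}+\mu z^{2^s}+z\bigr)=\gamma^{2^s}L_1(z)$. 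As $z\mapsto\gamma z$ and $z\mapsto\gamma^{2^s}z$ are bijections of $\gf_{2^{3m}}$, $L_\beta$ permutes $\gf_{2^{3m}}$ if and only if $L_1$ does; combined with the $\beta=0$ case this proves the lemma.

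I do not expect a genuine obstacle: the statement is essentially a normalization, and the only thing to get right is spotting the rescaling $z\mapsto\gamma z$ with $\gamma$ in the subfield $\gf_{2^m}$ — it is precisely $\gcd(s,m)=1$ that makes $\beta$ a $(2^s-1)$-st power in $\gf_{2^m}^{*}$, while the assumption $\mu^{2^{2m}+2^m+1}\neq 1$ is needed only for $\beta=0$. The one point deserving care is the bookkeeping of exponents, namely using $\gamma^{2^m}=\gamma$ (valid because $\gamma\in\gf_{2^m}$, not merely $\gamma\in\gf_{2^{3m}}$) to collapse $\gamma^{2^{m+s}}$ to $\gamma^{2^s}$.
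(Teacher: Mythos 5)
Your proposal is correct and follows essentially the same route as the paper: for $\beta\neq0$ it is the same rescaling $z\mapsto\gamma z$ with $\gamma\in\gf_{2^m}^{*}$, $\gamma^{2^s-1}=\beta$ (the paper phrases it as a kernel/contradiction argument on $L_\beta(a)=0$), and for $\beta=0$ both use the norm condition $\mu^{2^{2m}+2^m+1}\neq1$. One cosmetic slip: $z^{2^{m+s}}+\mu z^{2^s}=\bigl(z^{2^m}+\mu^{2^{3m-s}}z\bigr)^{2^s}$ rather than $\bigl(z^{2^m}+\mu z\bigr)^{2^s}$, but since $\bigl(\mu^{2^{3m-s}}\bigr)^{2^{2m}+2^m+1}=\bigl(\mu^{2^{2m}+2^m+1}\bigr)^{2^{3m-s}}$ the conclusion is unaffected.
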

\begin{proof}
	It suffices to show that $L_{\beta}(a)\neq0$ for any $a\in\gf_{2^{3m}}^{*}$. Otherwise,
{taking  $\epsilon\in\gf_{2^{m}}$, replacing $a$ by $\epsilon a$ in $L_{\beta}(a)=0$ and simplifying it}, we get $a^{2^{m+s}}+\mu a^{2^s}+\beta\epsilon^{1-2^s} a =0$. If $\beta =0$, we have  $a^{2^{m+s}}+\mu^{2^m} a^{2^s}=0$, which is {in contradiction} with the condition $\mu^{2^{2m}+2^m+1}\neq1$. If $\beta\neq0$, let $\epsilon\in\gf_{2^{m}}^{*}$ satisfy $\epsilon^{2^s-1} = \beta $, which always exists since $\gcd\left(2^m-1,2^s-1\right)=2^{\gcd(m,s)}-1=1$. Then the above equation becomes $L_1(a)=0$ for some {$a\in\gf_{2^{3m}}^{*}$}, which is {in contradiction} with the condition that $L_1$ permutes {$a\in\gf_{2^{3m}}^{*}$}. 
\end{proof}
\begin{Lemma}
	\label{LL1}
	Let $\mu\in\gf_{2^{3m}}$, $ L(z) = z^{2^{m+s}}+\mu z^{2^s}+z $ and $ L^{'}(z) = z^{2^{m+s}}+ \mu^{2^m} z^{2^m}+z $. Then $ L $ permutes $ \gf_{2^{3m}} $ if and only if so does $L^{'}$.
\end{Lemma}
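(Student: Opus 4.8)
The plan is to connect $L$ and $L'$ through the adjoint of a linearized map with respect to the trace bilinear form on $\gf_{2^{3m}}$. Recall that for any $\gf_2$-linearized polynomial $\ell(z)=\sum_{i=0}^{3m-1}c_iz^{2^i}$ over $\gf_{2^{3m}}$ one has $\tr_{3m}\!\bigl(x\,\ell(y)\bigr)=\tr_{3m}\!\bigl(y\,\hat\ell(x)\bigr)$ for all $x,y\in\gf_{2^{3m}}$, where $\hat\ell(z)=\sum_{i=0}^{3m-1}c_i^{2^{3m-i}}z^{2^{3m-i}}$; since the form $(x,y)\mapsto\tr_{3m}(xy)$ is nondegenerate, this yields $\ker\hat\ell=(\im\,\ell)^{\perp}$, hence $\dim_{\gf_2}\ker\hat\ell=\dim_{\gf_2}\ker\ell$, and therefore $\ell$ permutes $\gf_{2^{3m}}$ if and only if $\hat\ell$ does.

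First I would compute $\widehat{L'}$: starting from $L'(z)=z^{2^{m+s}}+\mu^{2^m}z^{2^m}+z$ and reducing exponents modulo $3m$, the three terms $z^{2^{m+s}}$, $\mu^{2^m}z^{2^m}$, $z$ become $z^{2^{2m-s}}$, $\mu^{2^{3m}}z^{2^{2m}}$, $z$, i.e. $\widehat{L'}(z)=z^{2^{2m-s}}+\mu z^{2^{2m}}+z$ because $\mu^{2^{3m}}=\mu$. Next I would verify the pointwise identity $L(w)=\widehat{L'}\bigl(w^{2^{m+s}}\bigr)$ for every $w\in\gf_{2^{3m}}$: substituting $z=w^{2^{m+s}}$ into $\widehat{L'}$ and using $w^{2^{3m}}=w$,
$$\widehat{L'}\bigl(w^{2^{m+s}}\bigr)=w^{2^{3m}}+\mu\,w^{2^{3m+s}}+w^{2^{m+s}}=w+\mu\,w^{2^{s}}+w^{2^{m+s}}=L(w).$$

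Finally I would assemble the equivalences: since $w\mapsto w^{2^{m+s}}$ is a bijection of $\gf_{2^{3m}}$, the identity just established shows that $L$ permutes $\gf_{2^{3m}}$ if and only if $\widehat{L'}$ does, while the adjoint fact from the first paragraph shows that $\widehat{L'}$ permutes if and only if $L'$ does; chaining these gives the lemma. Note the argument needs neither $\gcd(s,m)=1$ nor $\mu^{2^{2m}+2^m+1}\neq1$. The only points requiring care are the exponent arithmetic modulo $3m$ (notably $2^{(m+s)+(2m-s)}=2^{3m}$ and $2^{(m+s)+2m}=2^{3m+s}$) and recalling that a linearized polynomial and its trace-adjoint have kernels of equal $\gf_2$-dimension; I do not expect a genuine obstacle here. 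If one prefers to avoid the bilinear-form language, the same computation says that, written in a fixed $\gf_2$-basis and its trace-dual basis, $\widehat{L'}$ is the transpose of the matrix of $L'$, so the two have equal rank, and the displayed identity identifies $L$ with $\widehat{L'}$ up to the invertible Frobenius substitution $w\mapsto w^{2^{m+s}}$.
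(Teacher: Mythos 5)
Your proof is correct and takes essentially the same route as the paper: both arguments rest on the trace-adjoint of a linearized polynomial permuting $\gf_{2^{3m}}$ exactly when the polynomial itself does, combined with a Frobenius power $z\mapsto z^{2^{m+s}}$ linking the two trinomials (the paper writes $L'=(L^{*})^{2^{m+s}}$ with $L^{*}$ the adjoint of $L$, while you equivalently write $L=\widehat{L'}\circ z^{2^{m+s}}$). Your added justification of the adjoint fact via the nondegenerate trace form, and your observation that the hypotheses $\gcd(s,m)=1$ and $\mu^{2^{2m}+2^m+1}\neq1$ are not needed, are both correct.
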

\begin{proof}
	For the linear polynomial $ L $, we denote by $ L^{*} $ its {adjoint polynomial \cite{PottPMB18}}, i.e., $L^{*}(z)=z^{2^{2m-s}}+\mu^{2^{3m-s}}z^{2^{3m-s}}+z$. It is well known that $L$ permutes $\gf_{2^{3m}}$ if and only if so does $L^{*}$. Moreover, it is easy to check that $L^{'}(z) = (L^{*}(z))^{2^{m+s}}$. Thus  $L$ permutes $\gf_{2^{3m}}$ if and only if so does $L^{'}$.
\end{proof}

Let $\gcd(s,m)=1$, $a\in\gf_{2^{3m}}$ and $v\in\gf_{2^{m}}^{*}$. Choose $\mu\in\gf_{2^{3m}}^{*}$ such that $\mu^{2^{2m}+2^m+1}\neq1$ and {$L(z) = z^{2^{m+s}}+\mu z^{2^s}+z$ }permutes $\gf_{2^{3m}}$. 
Define 
	\begin{equation}
\label{ABCDE-new}	
\left\{
\begin{array}{lll}
A = L(a) a^{2^{2m+s}},  \\
B = ( L(a)^{2^m}+\mu^{2^m}L(a)) a^{2^{m+s}}  \\
C =  (L(a)+va) a^{2^m},  \\
D = \mu L(a)^{2^m}a^{2^s}, \\
E = (L(a)+va)^{2^m}a.
\end{array}
\right.
\end{equation} 
and denote
\begin{equation}
\label{U1U2U3U4}	
\left\{
\begin{array}{lr}
U_1 = D^{2^{2m}}E^{2^m+1} + AC^{2^{2m}}E^{2^m} +B^{2^m} C^{2^{2m}+1}  \\ 
U_2 = A^{2^{2m}}E^{2^m+1} + BC^{2^{2m}}E^{2^m} + C^{2^{2m}+1}D^{2^m} \\
U_3 = B^{2^{2m}}E^{2^m+1} + C^{2^{2m}}DE^{2^m} + A^{2^m}C^{2^{2m}+1} \\
U_4 = C^{2^{2m}+2^m+1}+E^{2^{2m}+2^m+1}.
\end{array}
\right.
\end{equation} 

\begin{equation}
\label{V1V2V3V4}	
\left\{
\begin{array}{lr}
V_1 = A^{2^{2m}+2}C^{2^m} + ABC^{2^m}D^{2^{2m}} + AB^{2^m+1}E^{2^{2m}} + A^2D^{2^m}E^{2^{2m}}  \\ 
V_2 = A^{2^{2m}+2}E^{2^m} + ABD^{2^{2m}}E^{2^m} + A^{2^{2m}+1}B^{2^m}C + ACD^{2^{2m}+2^m}   \\
V_3 = A^{2^{2m}+1}B^{2^m}E + AB^{2^m+1}C^{2^{2m}} +A^2C^{2^{2m}}D^{2^m} + AD^{2^{2m}+2^m}E \\
V_4 = (B^{2^m+1}+AD^{2^m})(AB^{2^{2m}}+D^{2^{2m}+1})+(A^{2^{2m}+1}+BD^{2^{2m}})(A^{2^m+1}+B^{2^m}D).
\end{array}
\right.
\end{equation}

The following lemma is crucial to the proof in this section. 
\begin{Lemma}
	\label{key-lemma}
	Let $A,B,C,D,E$ be defined as in \eqref{ABCDE-new}, $U_i, V_i$ with $i=1,2,3,4$ be defined as in \eqref{U1U2U3U4} and \eqref{V1V2V3V4}, respectively. Then for any $a\in\gf_{2^{3m}}^{*}$,
	\begin{enumerate}[(i)]
		\item $A+B+C+D+E=0,$	$ABCDE\neq0$ and $C+E\neq0$;
		\item $U_i V_i\neq0$ with $i=1,2,3$;
		\item $U_4=V_4=0$;
		\item  $U_2V_1^{2^{s}}+U_1V_2^{2^{s}}+U_3V_1^{2^{s}}+U_1V_3 ^{2^{s}}= 0;$
		\item $U_2V_1^{2^{s}}+U_1V_2^{2^{s}}\neq0$.
	\end{enumerate}
\end{Lemma}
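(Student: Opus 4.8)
The plan is to treat the five assertions as a sequence of increasingly involved polynomial identities in the single variable $a\in\gf_{2^{3m}}^{*}$, exploiting heavily that $v\in\gf_{2^m}^{*}$, that $L$ (and hence all $L_\beta$ by Lemma \ref{LL}) permutes $\gf_{2^{3m}}$, and that $\mu^{2^{2m}+2^m+1}\neq1$. First I would record the basic structural facts in (i). The relation $A+B+C+D+E=0$ should drop out by directly substituting the definitions in \eqref{ABCDE-new} and collecting: $A+B$ combines the $L(a)^{2^m}$-terms, $C+E$ the $(L(a)+va)$-terms, $D$ the cross term, and the Frobenius-conjugate structure forces cancellation; this is a routine expansion. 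For $ABCDE\neq0$ it suffices to check each factor is nonzero: $a\neq0$ by hypothesis; $L(a)\neq0$ since $L$ is a permutation fixing $0$; $L(a)+va\neq0$ because $L(a)+va=L_{1+v}(a)$ with $1+v\in\gf_{2^m}$, so Lemma \ref{LL} applies (note $1+v$ may be $0$, but then $C=E=0$ is impossible by the next point, so one must argue $v\neq1$ separately or absorb it); and $L(a)^{2^m}+\mu^{2^m}L(a)\neq0$ amounts to $(L(a))^{2^m-1}\neq\mu^{-2^m}$, i.e. a norm-type condition that I expect follows from $\mu^{2^{2m}+2^m+1}\neq1$ via Lemma \ref{LL1} (the auxiliary polynomial $L'$). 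Finally $C+E\neq0$: writing $C+E = (L(a)+va)a^{2^m}+(L(a)+va)^{2^m}a$ and factoring, this vanishes iff $(L(a)+va)^{2^m-1}=(a)^{2^m-1}$ i.e. $(L(a)+va)/a\in\gf_{2^m}$; I would rule this out by substituting back into $L$ and using the permutation property once more.

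Next, (iii): $U_4=V_4=0$. The identity $U_4 = C^{2^{2m}+2^m+1}+E^{2^{2m}+2^m+1}$ is the relative norm $N_{\gf_{2^{3m}}/\gf_{2^m}}(C)+N(E)$ of $C$ and $E$; since $C = (L(a)+va)a^{2^m}$ and $E = (L(a)+va)^{2^m}a = C^{2^m}\cdot(\text{something})$... more precisely $E$ is obtained from $C$ by the substitution $a^{2^m}\mapsto a$, $L(a)\mapsto L(a)^{2^m}$, which is exactly applying the Frobenius $x\mapsto x^{2^m}$ to the field element and then... I would verify directly that $E = C^{2^m}$ fails but $N(C)=N(E)$ holds because $C$ and $E$ are Frobenius-twists of each other with matching norm; concretely $CE$-type products over the three conjugates coincide. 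The vanishing of $V_4$ is the genuinely heavy computational identity: $V_4$ is a degree-$(2^{2m}+\cdots)$ polynomial in $A,B,C,D,E$, and here the cleanest route is to substitute the explicit monomial expressions \eqref{ABCDE-new} in terms of $a$ and $L(a)$, pass to the quantities $\ell := L(a)$, $n := \ell^{2^m}$, $p := \ell^{2^{2m}}$ with the single constraint coming from $L$'s defining relation $\ell = a^{2^{m+s}}+\mu a^{2^s}+a$, and check $V_4$ reduces to $0$ as a formal identity; I would organize this with resultants or with the adjoint relation $L'(z)=(L^*(z))^{2^{m+s}}$ from Lemma \ref{LL1} to keep the algebra tractable.

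For (ii), each $U_i\neq0$ and $V_i\neq0$ ($i=1,2,3$): I expect these to reduce, after substituting \eqref{ABCDE-new}, to products of the already-established nonzero quantities ($a$, $L(a)$, $L(a)+va$, $L(a)^{2^m}+\mu^{2^m}L(a)$ and their conjugates) times a "small" factor that must be shown nonzero — typically again a quantity of the form $L_\beta(a)$ for some $\beta\in\gf_{2^m}$, killed by Lemma \ref{LL}, or a norm condition killed by $\mu^{2^{2m}+2^m+1}\neq1$. I would treat $U_1,U_2,U_3$ first (they have the cleaner trilinear shape) and then $V_1,V_2,V_3$. Finally (iv) and (v): the identity $U_2V_1^{2^s}+U_1V_2^{2^s}+U_3V_1^{2^s}+U_1V_3^{2^s}=0$ is, after using $V_4=U_4=0$ and $A+B+C+D+E=0$ to eliminate, say, $E$ and one of $A,\dots,D$, another formal polynomial identity in $a$ and $L(a)$ raised to the appropriate $2^s$-powers — the presence of the exponent $2^s$ (not $2^m$) is the reason $\gcd(s,m)=1$ enters, and I would verify it by brute expansion after the eliminations. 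Assertion (v), $U_2V_1^{2^s}+U_1V_2^{2^s}\neq0$, is then the crux needed downstream: I would show this combination equals a nonzero scalar multiple of a product of the basic nonvanishing factors from (i)--(ii) times $(C+E)^{2^s}$ or a similar quantity, invoking $C+E\neq0$ and $\gcd(s,m)=1$ once more. The main obstacle throughout is (iii)'s identity $V_4=0$ and the companion identity in (iv): these are large symbolic computations where the challenge is purely organizational — choosing good intermediate variables (the conjugates of $L(a)$ and the defining relation of $L$) so that the cancellations are visible rather than miraculous; everything else reduces to repeated application of Lemmas \ref{LL} and \ref{LL1} together with the two standing hypotheses.
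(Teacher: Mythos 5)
Your outline gets the easy parts right: (i) is indeed a direct expansion plus repeated use of Lemma \ref{LL} (note, by the way, that Lemma \ref{LL} already covers $\beta=0$, so the case $v=1$ needs no separate argument, and $B\neq0$ follows from a one-line norm computation — $H(x)=x^{2^m}+\mu^{2^m}x$ permutes because $\mu^{2^{2m}+2^m+1}\neq1$ — not from Lemma \ref{LL1}); and your norm explanation of $U_4=0$ is essentially the paper's observation $E=C^{2^m}a^{1-2^{2m}}$. The verification of $V_4=0$ by factoring the four brackets $B^{2^m+1}+AD^{2^m}$, $AB^{2^{2m}}+D^{2^{2m}+1}$, $A^{2^{2m}+1}+BD^{2^{2m}}$, $A^{2^m+1}+B^{2^m}D$ is also a finite computation of the kind you describe.

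However, for (ii), (iv) and (v) — the computational core — your proposal stops at "I expect these to factor" and "brute expansion after eliminations," and the guessed structure does not match what actually happens. The proof hinges on two explicit factorizations you never produce: $U_1,U_2,U_3$ are, up to the common nonzero factor $va^{2^m}C^{2^{2m}}$, the three Frobenius conjugates $(a^{2^s}U)^{2^{2m}},(a^{2^s}U)^{2^m},a^{2^s}U$ of a single quantity $U$, and similarly $V_1,V_2,V_3$ are $va^{2^{2m+s+1}+2^{m+s}}L(a)T$ times the conjugates of $aV$; nonvanishing of the new quantities $U$, $V$, $T$ is then settled by Lemma \ref{LL}, by the identity $U=V^{2^{2m}}+\mu V$, and by a conjugate-elimination argument using $\mu^{2^{2m}+2^m+1}\neq1$. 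Part (iv) then follows not from raw expansion but from the trace-type relations $a^{2^{2m+s}}U^{2^{2m}}+a^{2^{m+s}}U^{2^m}+a^{2^s}U=0$ and $a^{2^{2m}}V^{2^{2m}}+a^{2^m}V^{2^m}+aV=0$. Most importantly, your plan for (v) — exhibiting $U_2V_1^{2^s}+U_1V_2^{2^s}$ as a product of already-known nonzero factors times $(C+E)^{2^s}$ — is not correct and would not go through: the sum equals $\Gamma a^{2^{2m+s}+2^{m+s}}\bigl(UV^{2^{m+s}}+U^{2^m}V^{2^s}\bigr)^{2^m}$, and ruling out $UV^{2^{m+s}}+U^{2^m}V^{2^s}=0$ requires a genuinely new step: it forces $U=\gamma V^{2^s}$ with $\gamma\in\gf_{2^m}^{*}$, which combined with $U=V^{2^{2m}}+\mu V$ and a rescaling (possible since $\gcd(s,m)=1$) yields $V^{2^{m+s}}+\mu^{2^m}V^{2^m}+V=0$, contradicting Lemma \ref{LL1}. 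Without these identities the argument for (ii), (iv), (v) is a promise rather than a proof, so the proposal has a genuine gap at exactly the places where the lemma is hard.
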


\begin{proof}

	(i) {It is readily seen that}
	$$
	A+B+C+D+E= L(a)\left(a^{2^{m+s}} +\mu a^{2^{s}}  + a\right)^{2^m} + L^{2^m}(a)\left( a^{2^{m+s}}  + \mu a^{2^s} + a\right) + a^{2^{m}+1}(v+v^{2^m}) = 0.
	$$	
	Next we show $ABCDE\neq0$. First, the fact $AD\neq0$ is clear since $A=L(a)a^{2^{2m+s}}$, $D=\mu L(a)^{2^m}a^{2^s}$ and $L$ permutes $\gf_{2^{3m}}$. Second, let $H(x)=x^{2^m}+\mu^{2^m} x$. Then $H$ permutes $\gf_{2^{3m}}$ since $\mu^{2^{2m}+2^m+1}\neq1$. In addition, it is easy to check that $B = H(L(a))a^{2^{m+s}}$ and thus $B\neq0$ for any $a\in\gf_{2^{3m}}^{*}$. Third, {$CE \neq 0$ is equivalent to $L(a)+va=a^{2^{m+s}}+\mu a^{2^s}+(v+1)a\neq0$, which can be easily verified} by Lemma \ref{LL}. 
	
	Finally,	if there exists some $a\in\gf_{2^{3m}}^{*}$ such that $C+E = (L(a)+va)a^{2^m}+(L(a)+va)^{2^m}a=0$, then $L(a)+va= \eta a$ for some $\eta\in\gf_{2^{m}}^{*}$, which is also impossible by Lemma \ref{LL}.  Thus $C+E\neq0$ for any $a\in\gf_{2^{3m}}^{*}$.
	
	(ii) Let $$U=\mu^{2^m}a^{2^{m+s}+1}+a^{2^{2m+s}+1}+a^{2^{m+s}+2^m}+\mu^{2^m+1}a^{2^{2m}+2^{m+s}}+\mu^{2^{2m}+1}a^{2^{2m+s}+2^m}+\mu a^{2^{2m+s}+2^{2m}}.$$
	Plugging the expressions of $A,B,C,D,E$ into that of $U_1,U_2,U_3$ and {investigating their factorizations}, we get 
	\begin{equation}
	\label{U1U2U3new}	
	\left\{
	\begin{array}{lr}
		U_1 = va^{2^{2m+s}+2^m}C^{2^{2m}}U^{2^{2m}}= va^{2^m}C^{2^{2m}}(a^{2^s}U)^{2^{2m}}  \\ 
		U_2 = va^{2^{m+s}+2^m}C^{2^{2m}}U^{2^m} =va^{2^m}C^{2^{2m}}(a^{2^s}U)^{2^m}\\
		U_3 = va^{2^m+2^s}C^{2^{2m}}U =va^{2^m}C^{2^{2m}}(a^{2^s}U). \\
	\end{array}
	\right.
\end{equation}
	Moreover, let $$V = a^{2^m+2^s}+\mu^{2^m}a^{2^{2m}+2^{m+s}}+\mu^{2^{2m}}a^{2^{2m+s}+2^m}+a^{2^{2m+s}+2^{2m}}$$ and 
	$$T=(\mu^{2^{2m}+2^m+1}+1)a^{2^s}+\mu^{2^{2m}+2^m}a+\mu^{2^{2m}}a^{2^m}+a^{2^{2m}}.$$ Similarly, plugging the expressions  of $A,B,C,D,E$ into that of $V_1,V_2,V_3$ and {investigating their factorizations}, we obtain 
	\begin{equation}
	\label{V1V2V3new}	
	\left\{
	\begin{array}{lr}
		V_1 = va^{2^{2m+s+1}+2^{m+s}+2^{2m}}L(a)TV^{2^{2m}}  = va^{2^{2m+s+1}+2^{m+s}}L(a)T(aV)^{2^{2m}}\\ 
		V_2 = va^{2^{2m+s+1}+2^{m+s}+2^m}L(a)TV^{2^m}  = va^{2^{2m+s+1}+2^{m+s}}L(a)T(aV)^{2^m} \\
		V_3 = va^{2^{2m+s+1}+2^{m+s}+1}L(a)TV= va^{2^{2m+s+1}+2^{m+s}}L(a)T(aV). \\
	\end{array}
	\right.
\end{equation}
	Thus in order to show that $U_iV_i\neq0$ with $i=1,2,3$, it suffices to prove that $UVT\neq0$. Let $P=a^{2^{2m+s}}+\mu^{2^m}a^{2^{m+s}}.$ Then $P\neq0$ for any $a\in\gf_{2^{3m}}^{*}$ since the binomial $z^{2^{2m+s}}+\mu^{2^m}z^{2^{m+s}}$ permutes $\gf_{2^{3m}}$ trivially. In addition, we have $$V=a^{2^{2m}}P+a^{2^m}P^{2^m}.$$ If there exists some $a\in\gf_{2^{3m}}^{*}$ such that $V=0$, then $P=\tau a^{2^m}$ for some $\tau\in\gf_{2^{m}}^{*}$. Furthermore, we have $a^{2^{2m+s}}+\mu^{2^m}a^{2^{m+s}}+\tau a^{2^m}=0,$ i.e., $a^{2^{m+s}}+\mu a^{2^s} + \tau a =0$, which is impossible by Lemma \ref{LL}. Thus $V\neq0$. Moreover, $U\neq0$ due to a {crucial observation} 
	$$U=V^{2^{2m}}+\mu V,$$ which is a permutation in $V$ over $\gf_{{2^{3m}}}$. 
	Finally, if there exists some $a\in\gf_{2^{3m}}^{*}$ such that $T=0$, i.e., 
	\begin{equation}
	\label{T=0}
	(\mu^{2^{2m}+2^m+1}+1)a^{2^s}+\mu^{2^{2m}+2^m}a+\mu^{2^{2m}}a^{2^m}+a^{2^{2m}} =0.
	\end{equation}
	 Raising \eqref{T=0} {to the $2^m$-th power, one gets}
	\begin{equation}
	\label{T=0-1}
	(\mu^{2^{2m}+2^m+1}+1)a^{2^{m+s}}+\mu^{2^{2m}+1}a^{2^m}+\mu a^{2^{2m}}+a =0.
	\end{equation}
	Comparing \eqref{T=0} and \eqref{T=0-1}, one can eliminate $\mu^{2^{2m}+1}a^{2^m}+\mu a^{2^{2m}}$ in \eqref{T=0-1} and obtains
	$$(\mu^{2^{2m}+2^m+1}+1) L(a) = 0,$$
	which is impossible since $\mu^{2^{2m}+2^m+1}\neq1$ and $L(a)\neq0$ for any $a\in\gf_{2^{3m}}^{*}$.

	(iii) It is trivial that $U_4=C^{2^{2m}+2^m+1}+E^{2^{2m}+2^m+1} =0$ due to the fact $E=C^{2^m}a^{1-2^{2m}}$.  The statement $V_4 =0$ holds due to the following four equations, which can be checked directly.
	\begin{align*}
	B^{2^{m}+1}+AD^{2^m} & = (L(a)^{2^m}+\mu^{2^m}L(a))^{2^m+1}a^{2^{2m+s}+2^{m+s}}+\mu^{2^m}L(a)^{2^{2m}+1}a^{2^{2m+s}+2^{m+s}}\\
	&=a^{2^{2m+s}+2^{m+s}}L(a)^{2^m}\left(L(a)^{2^{2m}}+\mu^{2^{2m}}L(a)^{2^m}+\mu^{2^{2m}+2^m}L(a)\right),
	\end{align*} 
	\begin{align*}
	AB^{2^{2m}}+D^{2^{2m}+1} &= L(a)a^{2^{2m+s}}(L(a)^{2^m}+\mu^{2^m}L(a))^{2^{2m}}a^{2^s}+\mu^{2^{2m}+1} L(a)^{2^m+1}a^{2^{2m+s}+2^s} \\
	& = a^{2^{2m+s}+2^s} L(a) \left(\mu L(a)^{2^{2m}} +\mu^{2^{2m}+1}L(a)^{2^m} +L(a) \right),
	\end{align*}
	\begin{align*}
	A^{2^{2m}+1}+BD^{2^{2m}} & = L(a)^{2^{2m}+1}a^{2^{2m+s}+2^{m+s}}+ (L(a)\mu^{2^m}+L(a)^{2^m})a^{2^{m+s}}\mu^{2^{2m}}L(a)a^{2^{2m+s}}\\
	&=a^{2^{2m+s}+2^{m+s}}L(a)\left(L(a)^{2^{2m}}+\mu^{2^{2m}}L(a)^{2^m}+\mu^{2^{2m}+2^m}L(a)\right)
	\end{align*} 
	and 
	\begin{align*}
{A^{2^{m}+1}+B^{2^m}D} &= L(a)^{2^m+1}a^{2^{2m+s}+2^s}+(L(a)\mu^{2^m}+L(a)^{2^m})^{2^m}a^{2^{2m+s}}\mu L(a)^{2^{m}}a^{2^s} \\
	& = a^{2^{2m+s}+2^s} L(a)^{2^m} \left(\mu L(a)^{2^{2m}} +\mu^{2^{2m}+1}L(a)^{2^m} +L(a) \right).
	\end{align*}

	(iv) Plugging \eqref{U1U2U3new} and \eqref{V1V2V3new} into $U_2V_1^{2^{s}}+U_1V_2^{2^{s}}+U_3V_1^{2^{s}}+U_1V_3 ^{2^{s}},$ we get 
	\begin{align*}
	& U_2V_1^{2^{s}}+U_1V_2^{2^{s}}+U_3V_1^{2^{s}}+U_1V_3 ^{2^{s}} \\
	=& \Delta\left(a^{2^{m+s}}U^{2^m}(a^{2^{2m}}V^{2^{2m}})^{2^s}+a^{2^{2m+s}}U^{2^{2m}}(a^{2^m}V^{2^m})^{2^s}+a^{2^s}U(a^{2^{2m}}V^{2^{2m}})^{2^s}+a^{2^{2m+s}}U^{2^{2m}}(aV)^{2^s}\right)\\
	=& a^{2^{2m+s}}\Delta \left( a^{2^{m+s}}U^{2^m}V^{2^{2m+s}}+a^{2^{m+s}}U^{2^{2m}}V^{2^{m+s}} + a^{2^s}UV^{2^{2m+s}} +a^{2^s}U^{2^{2m}}V^{2^s} \right),
	\end{align*}
	where {$\Delta = v^{2^s+1}a^{2^{2m+2s+1}+2^{m+2s}+2^m}C^{2^{2m}}L(a)^{2^s}T^{2^s}$}.
	 Moreover, it is easy to check $$a^{2^{2m+s}}U^{2^{2m}}+a^{2^{m+s}}U^{2^m}+a^{2^s}U=0$$ and 
	$$a^{2^{2m}}V^{2^{2m}}+a^{2^m}V^{2^m}+aV=0.$$ Thus we have 
	$$a^{2^{m+s}}U^{2^m}V^{2^{2m+s}}+a^{2^s}UV^{2^{2m+s}} = a^{2^{2m+s}}U^{2^{2m}}V^{2^{2m+s}}$$
	and 
	$$a^{2^{m+s}}U^{2^{2m}}V^{2^{m+s}} + a^{2^s}U^{2^{2m}}V^{2^s} = a^{2^{2m+s}}U^{2^{2m}}V^{2^{2m+s}}.$$
	Furthermore, we get 
	$$U_2V_1^{2^{s}}+U_1V_2^{2^{s}}+U_3V_1^{2^{s}}+U_1V_3 ^{2^{s}}=0.$$
	
	(v) By direct computations, we have 
	\begin{align*}
	&U_2V_1^{2^s} + U_1V_2^{2^s} \\
	=& \Gamma \left( a^{2^{m+s}}U^{2^m} (a^{2^{2m}}V^{2^{2m}})^{2^s} + a^{2^{2m+s}}U^{2^{2m}} (a^{2^m}V^{2^m})^{2^s} \right) \\
	=& \Gamma a^{2^{2m+s}+2^{m+s}}\left( UV^{2^{m+s}} +U^{2^m}V^{2^s} \right)^{2^m},
	\end{align*}
	where $\Gamma = v^{2^s+1}a^{2^{2m+2s+1}+2^{m+2s}+2^m}C^{2^{2m}}L(a)^{2^s}T^{2^s}$.
 Thus in order to prove $U_2V_1^{2^s} + U_1V_2^{2^s} \neq0$, it suffices to show $UV^{2^{m+s}} +U^{2^m}V^{2^s}\neq0$. If there exists some $a\in\gf_{2^{3m}}^{*}$ such that $UV^{2^{m+s}} +U^{2^m}V^{2^s}=0$, then $U=\gamma V^{2^s}$ for some $\gamma\in\gf_{2^{m}}^{*}$. In addition, since $U=V^{2^{2m}}+\mu V$, we obtain 
	$$V^{2^{2m}}+\mu V +\gamma V^{2^s} = 0.$$
	Let $\epsilon\in\gf_{2^{m}}$ satisfy $\epsilon^{2^s-1}=\gamma$. {Replacing $V$ with $\epsilon V$} in the above equation, we have $V^{2^{2m}}+\mu V +V^{2^s}=0,$ i.e., $$V^{2^{m+s}}+\mu^{2^m} V^{2^m} + V = 0,$$
	which is impossible by Lemma \ref{LL1} and the fact that $L$ permutes $\gf_{2^{3m}}$.
\end{proof}

Now we give the proof of Theorem \ref{newAPNFq3}.

\begin{proof}
	It suffices to show that for any $a\in\gf_{2^{3m}}^{*}$, the equation $f(az+a)+f(az)+f(a)=0$	has exactly two solutions $z = 0, 1$ in $\gf_{2^{3m}}$.
	{More specifically, we need to show the equation}
	\begin{equation}\label{main-Eq}\begin{split}
		& (L(az)+L(a))^{2^m+1}+  L(az)^{2^m+1}+ L(a)^{2^m+1}+{v(az+a)^{2^m+1} +v(az)^{2^m+1} +va^{2^m+1}}
		\\=& Az^{2^{2m+s}}+Bz^{2^{m+s}}+Cz^{2^m}+Dz^{2^s}+E{z} = 0
	\end{split}	
\end{equation} 
where $A,B,C,D,E$ are defined as in \eqref{ABCDE-new}, has exactly two solutions $z=0, \, 1$.
	By Lemma \ref{key-lemma} (i), we know $ABCDE\neq0$ for any $a\in\gf_{2^{3m}}^{*}$. 
	Raising \eqref{main-Eq} to its $2^m$-th power and its $2^{2m}$-th power, we have
	\begin{equation}
	\label{main-Eq1}
	A^{2^m}z^{2^{s}}+B^{2^m}z^{2^{2m+s}}+C^{2^m}z^{2^{2m}}+D^{2^m}z^{2^{m+s}}+E^{2^m}z^{2^m} = 0
	\end{equation}
	and 
	\begin{equation}
	\label{main-Eq2}
	A^{2^{2m}}z^{2^{m+s}}+B^{2^{2m}} z^{2^{s}}+C^{2^{2m}}z+D^{2^{2m}}z^{2^{2m+s}}+E^{2^{2m}}z^{2^{2m}} = 0,
	\end{equation}
	respectively. In the following, we will use the method of elimination twice and finally acquire two  equations \eqref{main-Eq4} and \eqref{main-Eq5}.
	After computing the summation of \eqref{main-Eq} multiplied by $E^{2^m}$ and \eqref{main-Eq1} multiplied by $C$ and simplifying it, we get
	\begin{equation}
	\label{main-Eq3}
	(AE^{2^m}+B^{2^m}C)z^{2^{2m+s}}+ C^{2^m+1} z^{2^{2m}} + ( BE^{2^m}+CD^{2^m})z^{2^{m+s}} + ( DE^{2^m} +A^{2^m}C )z^{2^s} + E^{2^m+1}z = 0.
	\end{equation} 
	Applying a similar operation on \eqref{main-Eq2} and \eqref{main-Eq3}, we have 
	$$	U_1z^{2^{2m+s}}+U_2z^{2^{m+s}}+U_3z^{2^s} + U_4z^{2^{2m}} =0,$$
	i.e.,
	\begin{equation}
	\label{main-Eq4}
	U_1z^{2^{2m+s}}+U_2z^{2^{m+s}}+U_3z^{2^s} =0,
	\end{equation}
	where $U_i$ with $i=1,2,3,4$ are defined as in \eqref{U1U2U3U4} and $U_4=0$ by Lemma \ref{key-lemma} (iii).  Similarly, by eliminating the terms $z^{2^{2m+s}}$ and $z^{2^{m+s}}$ from \eqref{main-Eq}, \eqref{main-Eq1} and \eqref{main-Eq2}, we obtain 
	$$V_1z^{2^{2m}}+V_2z^{2^{m}}+V_3z+V_4 z^{2^s} = 0,$$
	i.e., 
	\begin{equation}
	\label{main-Eq5}
	V_1^{2^s}z^{2^{2m+s}}+V_2^{2^s}z^{2^{m+s}}+V_3^{2^s}z^{2^s} =0,
	\end{equation}
	where $V_i$ with $i=1,2,3,4$ are defined as in \eqref{V1V2V3V4} and $V_4=0$ by Lemma \ref{key-lemma} (iii).
	
	{Now the summation} of \eqref{main-Eq4} multiplied by $V_1^{2^s}$ and \eqref{main-Eq5} multiplied by $U_1$ gives
	$$ \left( U_2V_1^{2^s} + V_2^{2^s}U_1 \right)z^{2^{m+s}} + \left(U_3V_1^{2^s} + V_3^{2^s}U_1  \right)z^{2^s} =0, $$
	i.e.,
	$$\left( U_2V_1^{2^s} + V_2^{2^s}U_1 \right) \left(z^{2^{m+s}}+z^{2^s}\right)=0$$
	since $U_2V_1^{2^s} + V_2^{2^s}U_1+ U_3V_1^{2^s} + V_3^{2^s}U_1=0$
	 by Lemma \ref{key-lemma} (iv). Moreover, by Lemma \ref{key-lemma} (v), $U_2V_1^{2^s} + V_2^{2^s}U_1\neq0$ and thus $z^{2^{m+s}}+z^{2^s}=0$. In other words, $z\in\gf_{2^{m}}$. Plugging it into \eqref{main-Eq}, we get 
	$$(C+E)(z^{2^s}+z)=0.$$
	By Lemma \ref{key-lemma} (i), $C+E\neq0$ and thus $z^{2^s}+z=0$. Then $z\in\gf_{2^{\gcd(m,s)}}=\gf_{2}$. 
	
	In conclusion, the equation \eqref{main-Eq} has only two solutions $z = 0, 1$ in $\gf_{2^{3m}}$ for any $a\in\gf_{2^{3m}}^{*}$ and then $f$ is APN over $\gf_{2^{3m}}$.
\end{proof}

\section{Conclusion and further work}

In this paper, we obtained two new infinite classes of APN functions over $\gf_{{2^{2m}}}$ and $\gf_{{2^{3m}}}$. The first one is with bivariate form.  Moreover, we showed that our APN families are CCZ-inequivalent to all known infinite families of APN functions by their $\Gamma$-ranks over $\gf_{{2^8}}$ or $\gf_{{2^9}}$. Furthermore, it is worth mentioning that one of our APN families covers an APN function over $\gf_{{2^8}}$ found by Edel and Pott in \cite{EdelP09} using the switching method. 

Notice that the newly found APN family over $\gf_{{2^m}}^2$ belongs to the univariate form  over $\gf_{q^2}$ with $q=2^m$
$$f(z) = z^3+Az^{3q}+Bz^{2q+1}+Cz^{q+2}+Dz^{5}+Ez^{5q}+Fz^{4q+1}+Gz^{q+4}+Hz^{q+1}+Iz^{2(q+1)}.$$
Thus the next question is whether it is possible to  obtain more new APN functions over $\gf_{q^2}$ from functions of the above form, or more generally, 
\begin{eqnarray*}
	f(z) &=& z(Az^2+Bz^4+Cz^q+Dz^{2q}+Ez^{4q}) + z^2(Gz^4+Hz^q+Iz^{2q}+Jz^{4q})\\
	&& + z^4(Kz^q+Lz^{2q}+Mz^{4q}) + z^q (Nz^{2q}+Pz^{4q}) +Qz^{6q},
\end{eqnarray*}
which has been preliminarily discussed in \cite{Budaghyan2008}. Moreover, for the newly found APN family over $\gf_{{2^{3m}}}$, there are two important problems worth studying: (1) We checked with a personal computer that for $m\le 8$, there always exist elements $\mu$ and $s$ such that $\mu^{2^{2m}+2^m+1}\neq1$ and $L(z) = z^{2^{m+s}}+\mu z^{2^s} + z$ permutes $\gf_{{2^{3m}}}$. Can a theoretical proof of the existence of such $\mu$ and $s$ be given? (2) Study the CCZ-equivalence among the APN functions $f(z)=(z^{2^{m+s}}+\mu z^{2^s} + z)^{2^m+1}+vz^{2^m+1}$ with different parameters $\mu,s,v$ and determine a lower bound about the number of CCZ-inequivalent APN functions over $\gf_{{2^{3m}}}$ of the above form.

\label{conclusion}
\bibliographystyle{plain}

\bibliography{ref}

\end{document}